\theoremstyle{definition}
\newtheorem{theorem}{Theorem}
\newtheorem{lemma}[theorem]{Lemma}
\newtheorem{cor}[theorem]{Corollary}
\newtheorem*{remark}{Remark}
\newcommand\eps{\varepsilon}
\newcommand\PP{\xspace \Pr}
\newcommand\EE{\xspace \mathbb{E}}
\newcommand\E{\EE}
\newcommand \midd{\; \middle| \;}
\begin{document}

\title{The Multiplicative Version of Azuma's Inequality, with an Application to Contention Analysis}

\author{William Kuszmaul}
\author{Qi Qi}
\affil{Massachusetts Institute of Technology\\\texttt {\{kuszmaul, qqi\}@mit.edu}}

\date{}

\maketitle

\thispagestyle{empty}

\begin{abstract}
  Azuma's inequality is a tool for proving concentration bounds on
  random variables. The inequality can be thought of as a natural
  generalization of additive Chernoff bounds. On the other hand, the
  analogous generalization of multiplicative Chernoff bounds does not appear to be widely known.

  We formulate a multiplicative-error version of Azuma's
  inequality. We then show how to apply this new inequality in order
  to greatly simplify (and correct) the analysis of contention delays
  in multithreaded systems managed by randomized work stealing.
\end{abstract}

\newpage
\pagenumbering{arabic}

\section{Introduction}

One of the most widely used tools in algorithm analysis is the Chernoff bound, which gives a concentration inequality on sums of independent random variables. The Chernoff bound exists in many forms, but the two most common variants are the additive and multiplicative bounds:

\begin{theorem}[Additive Chernoff Bound]
Let $X_1, \ldots, X_n \in \{0, 1\}$ be independent random variables, and let $X=\sum_{i=1}^nX_i$. Then for any $\eps>0$, 
\[\PP\left[X\geq \E[X] +\eps\right]\leq \exp\left(-\frac{2\eps^2}{n}\right).\]
\end{theorem}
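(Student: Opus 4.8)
The plan is to use the standard exponential-moment (Chernoff) method combined with Hoeffding's lemma. First I would fix a parameter $t > 0$ and apply Markov's inequality to the nonnegative random variable $e^{t(X - \E[X])}$, obtaining
\[
\Pr[X \geq \E[X] + \eps] \;=\; \Pr\bigl[e^{t(X - \E[X])} \geq e^{t\eps}\bigr] \;\leq\; e^{-t\eps}\,\E\bigl[e^{t(X - \E[X])}\bigr].
\]
Since the $X_i$ are independent, the expectation on the right factors as $\prod_{i=1}^n \E\bigl[e^{t(X_i - \E[X_i])}\bigr]$, which reduces everything to bounding the moment generating function of a single centered bounded variable.

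The crux of the argument is Hoeffding's lemma: if $Y$ satisfies $\E[Y] = 0$ and $Y \in [a, b]$ almost surely, then $\E[e^{tY}] \leq e^{t^2(b-a)^2/8}$. I would prove this by using convexity of $y \mapsto e^{ty}$ to bound $e^{tY}$ pointwise by the chord joining $(a, e^{ta})$ and $(b, e^{tb})$, taking expectations (using $\E[Y] = 0$), and then rewriting the resulting upper bound as $e^{\phi(t(b-a))}$ for an appropriate function $\phi$. Checking $\phi(0) = \phi'(0) = 0$ and $\phi''(u) \leq 1/4$ for all $u$ — the last because $\phi''$ is literally the variance of a Bernoulli-type random variable and hence at most $1/4$ — and applying Taylor's theorem gives $\phi(u) \leq u^2/8$, which is the lemma. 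Applying it with $Y = X_i - \E[X_i] \in [-p_i,\, 1 - p_i]$, where $p_i = \E[X_i]$, we have $b - a = 1$, so $\E\bigl[e^{t(X_i - \E[X_i])}\bigr] \leq e^{t^2/8}$ for each $i$.

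Putting the pieces together gives $\Pr[X \geq \E[X] + \eps] \leq e^{-t\eps + nt^2/8}$ for every $t > 0$. The final step is to optimize the exponent: the quadratic $t \mapsto -t\eps + nt^2/8$ is minimized at $t = 4\eps/n$, where its value is $-2\eps^2/n$, yielding the claimed bound $\exp(-2\eps^2/n)$. I expect the only genuine obstacle to be the proof of Hoeffding's lemma itself — the rest is bookkeeping — and within that, the sharp estimate $\phi'' \leq 1/4$, since this is exactly what produces the constant $2$ in the exponent rather than a weaker constant.
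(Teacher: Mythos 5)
Your proof is correct, but note that the paper does not actually prove the Additive Chernoff Bound at all --- it is stated as a background theorem and cited implicitly as standard; the paper's own proof work is devoted to the \emph{multiplicative} Azuma inequality (Theorem~\ref{thm:main}). So there is no ``paper's proof'' to compare against line-by-line.

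That said, your argument is the standard Hoeffding proof and it is sound: Markov applied to the MGF, factorization by independence, Hoeffding's lemma with $b-a=1$ giving $\E[e^{t(X_i-p_i)}]\le e^{t^2/8}$, and optimizing $-t\eps + nt^2/8$ at $t=4\eps/n$ to get $-2\eps^2/n$. The one step you flag as the ``genuine obstacle'' --- proving $\phi''\le 1/4$ in Hoeffding's lemma --- is correctly identified and your sketch of it (recognizing $\phi''$ as a Bernoulli variance) is the right idea; you would just need to spell out that $\phi(u)=\log\bigl(\frac{b}{b-a}e^{ua}+\frac{-a}{b-a}e^{ub}\bigr)$ and compute the second derivative.

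It is worth observing that your key lemma and the paper's Lemma~\ref{lem:ineq} both begin with the identical chord/convexity bound $\E[e^{tX}]\le \frac{b}{a+b}e^{-ta}+\frac{a}{a+b}e^{tb}$, but then diverge: you continue on to the Gaussian-type bound $e^{t^2(a+b)^2/8}$ (which, after optimizing $t$, produces additive error $\exp(-2\eps^2/n)$), whereas the paper stops at the Poisson-type bound $\exp\bigl(\frac{a}{a+b}(e^{t(a+b)}-1)-ta\bigr)$ (which, after optimizing $t$, produces multiplicative error $\exp(-\delta^2\mu/((2+\delta)c))$). This is exactly the fork in the road that separates additive from multiplicative Chernoff/Azuma bounds, and understanding where the two proofs split is a good way to internalize why the paper's new inequality is sharper when $\mu \ll cn$.
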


\begin{theorem}[Multiplicative Chernoff Bound]
Let $X_1, \ldots, X_n \in \{0, 1\}$ be independent random variables. Let $X=\sum_{i=1}^nX_i$ and let $\mu = \E[X]$. Then for any $\delta>0$,
\[\PP[X\geq(1+\delta)\mu]\leq \exp\left(-\frac{\delta^2\mu}{2+\delta}\right)\]
and for any $0<\delta<1$,
\[\PP[X\leq(1-\delta)\mu]\leq \exp\left(-\frac{\delta^2\mu}{2}\right).\]
\end{theorem}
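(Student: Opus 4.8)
The plan is to use the exponential-moment (Chernoff--Cram\'er) method. For the upper tail, fix $t > 0$ and apply Markov's inequality to the nonnegative random variable $e^{tX}$:
\[
\Pr\!\left[X \ge (1+\delta)\mu\right] = \Pr\!\left[e^{tX} \ge e^{t(1+\delta)\mu}\right] \le e^{-t(1+\delta)\mu}\,\mathbb{E}\!\left[e^{tX}\right].
\]
Since the $X_i$ are independent, $\mathbb{E}[e^{tX}] = \prod_{i=1}^n \mathbb{E}[e^{tX_i}]$, and writing $p_i = \mathbb{E}[X_i]$ we have $\mathbb{E}[e^{tX_i}] = 1 + p_i(e^t - 1) \le \exp(p_i(e^t - 1))$ by the bound $1 + x \le e^x$. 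Taking the product and using $\sum_i p_i = \mu$ yields $\mathbb{E}[e^{tX}] \le \exp(\mu(e^t - 1))$, and hence
\[
\Pr\!\left[X \ge (1+\delta)\mu\right] \le \exp\!\big(\mu\,(e^t - 1 - t(1+\delta))\big).
\]

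Next I would optimize the free parameter: the exponent is minimized at $t = \ln(1+\delta) > 0$, giving $\Pr[X \ge (1+\delta)\mu] \le \big(e^{\delta}/(1+\delta)^{1+\delta}\big)^{\mu} = \exp\big(\mu(\delta - (1+\delta)\ln(1+\delta))\big)$. To recover the stated form it then suffices to establish the scalar inequality $\delta - (1+\delta)\ln(1+\delta) \le -\delta^2/(2+\delta)$ for all $\delta > 0$, which after dividing by $1+\delta$ is equivalent to $\ln(1+\delta) \ge 2\delta/(2+\delta)$. I would prove this by letting $f(\delta) = \ln(1+\delta) - 2\delta/(2+\delta)$, observing $f(0) = 0$, and checking that $f'(\delta) = \tfrac{1}{1+\delta} - \tfrac{4}{(2+\delta)^2} \ge 0$, which reduces to $(2+\delta)^2 \ge 4(1+\delta)$, i.e.\ $\delta^2 \ge 0$.

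The lower tail is symmetric: for $t > 0$, apply Markov to $e^{-tX}$ to obtain $\Pr[X \le (1-\delta)\mu] \le \exp\big(\mu(e^{-t} - 1 + t(1-\delta))\big)$, then optimize at $t = -\ln(1-\delta) > 0$ to get $\Pr[X \le (1-\delta)\mu] \le \big(e^{-\delta}/(1-\delta)^{1-\delta}\big)^{\mu} = \exp\big(\mu(-\delta - (1-\delta)\ln(1-\delta))\big)$. It then remains to check that $-\delta - (1-\delta)\ln(1-\delta) \le -\delta^2/2$ for $0 < \delta < 1$, i.e.\ $(1-\delta)\ln(1-\delta) \ge -\delta + \delta^2/2$. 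Expanding the series gives $(1-\delta)\ln(1-\delta) = -\delta + \sum_{k \ge 2} \frac{\delta^k}{k(k-1)}$, whose tail is nonnegative for $0 < \delta < 1$, so dropping all terms past $k = 2$ yields exactly the needed bound.

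I expect the only genuine obstacle to be these two scalar inequalities at the very end; the rest is the standard Chernoff template. The upper-tail inequality is the more delicate of the two, since the target bound $\delta^2/(2+\delta)$ is not a truncated Taylor polynomial, so the monotonicity argument for $f$ (or an equivalent convexity argument) seems like the cleanest route; the lower-tail inequality instead falls out immediately from the power series of $(1-\delta)\ln(1-\delta)$.
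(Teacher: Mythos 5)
Your proof is correct and follows the standard Chernoff/MGF argument. Note that the paper states the Multiplicative Chernoff Bound as known background without proving it, but your argument mirrors precisely the template the paper uses in generalized form for Theorem~\ref{thm:main} and Theorem~\ref{thm:lower}---Markov applied to $e^{tX}$, bound the MGF by $\exp(\mu(e^t-1))$, optimize at $t=\ln(1+\delta)$, and close with the scalar inequalities $\delta-(1+\delta)\ln(1+\delta)\le-\delta^2/(2+\delta)$ and $-\delta-(1-\delta)\ln(1-\delta)\le-\delta^2/2$, which the paper also verifies by differentiation (your reduction of the upper-tail inequality to $(2+\delta)^2\ge4(1+\delta)$ is a slightly cleaner route than the paper's footnote, and your power-series truncation for the lower tail replaces the paper's second derivative check, but these are cosmetic variations on the same calculation).
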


Although the additive Chernoff bound is often convenient to use, the multiplicative bound can in some cases be much stronger. Suppose, for example, that $X_1, X_2, \ldots, X_n$ each take value $1$ with probability $(\log n) / n$. By the additive bound, one can conclude that $\sum_i X_i = O(\sqrt{n \log n})$ with high probability in $n$. On the other hand, the multiplicative bound can be used to show that $\sum_i X_i = O(\log n)$ with high probability in $n$. In general, whenever $\E[X] \ll n$, the multiplicative bound is more powerful.

\paragraph{Handling dependencies with Azuma's inequality.}
Chernoff bounds require that the random variables $X_1, X_2, \ldots, X_n$ be independent. In many algorithmic applications, however, the $X_i$'s are not independent, such as when analyzing algorithms in which $X_1, X_2, \ldots, X_n$ are the results of decisions made by an \emph{adaptive} adversary over time. When analyzing these applications (see, e.g., \cite{gyorfi1999simple, costello2011randomized,azar2013loss,aspnes2011randomized,gopinathan2011strategyproof,dinur2003revealing,yaroslavtsev2019approximate,komm2014randomized,levi2014local,chakrabarty2016widetilde, bollobas2003directed,bender2020contention,bender2019achieving,auletta2000randomized,kumar2001approximation}), a stronger inequality known as \emph{Azuma's inequality} is often useful.

\begin{theorem}[Azuma's inequality]
Let $Z_0, Z_1, \ldots, Z_n$ be a supermartingale, meaning that $\EE[Z_i\mid Z_0, \ldots, Z_{i-1}]\leq Z_{i-1}$. Assume additionally that $|Z_i-Z_{i-1}|\leq c_i$. Then for any $\eps>0$,
\[\PP[Z_n-Z_0\geq\eps]\leq \exp\left(-\frac{\eps^2}{2\sum_{i=1}^nc_i^2}\right).\]
\label{thm:azuma}
\end{theorem}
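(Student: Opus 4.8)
The plan is to prove Theorem~\ref{thm:azuma} by the exponential-moment method---the same Chernoff-style argument used for the bounds above, adapted to the martingale setting. Write $D_i = Z_i - Z_{i-1}$ for the $i$-th increment, so that $Z_n - Z_0 = \sum_{i=1}^n D_i$, and fix a parameter $\lambda > 0$ to be optimized at the end. Applying Markov's inequality to the nonnegative random variable $e^{\lambda(Z_n - Z_0)}$ gives
\[
\PP[Z_n - Z_0 \geq \eps] \;\leq\; e^{-\lambda \eps}\, \E\left[e^{\lambda(Z_n - Z_0)}\right],
\]
so it suffices to upper bound the moment generating function $\E\left[e^{\lambda \sum_i D_i}\right]$.

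The key ingredient is a one-step estimate (a form of Hoeffding's lemma): if $D$ is a random variable with $|D| \leq c$ and $\E[D] \leq 0$, then $\E\left[e^{\lambda D}\right] \leq e^{\lambda^2 c^2 / 2}$ for every $\lambda > 0$. I would prove this from the convexity of $t \mapsto e^{\lambda t}$ on $[-c, c]$: since $D = \tfrac{c-D}{2c}(-c) + \tfrac{c+D}{2c}(c)$ with both coefficients in $[0,1]$, convexity gives $e^{\lambda D} \leq \tfrac{c-D}{2c} e^{-\lambda c} + \tfrac{c+D}{2c} e^{\lambda c} = \cosh(\lambda c) + \tfrac{D}{c}\sinh(\lambda c)$. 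Taking expectations and using $\sinh(\lambda c) \geq 0$ together with $\E[D] \leq 0$ discards the linear term, leaving $\E\left[e^{\lambda D}\right] \leq \cosh(\lambda c)$; finally $\cosh(t) \leq e^{t^2/2}$ follows from a termwise comparison of power series, using $(2k)! \geq 2^k k!$.

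Next I would peel off the increments one at a time. Conditioned on $Z_0, \ldots, Z_{i-1}$, we have $|D_i| \leq c_i$ (this holds almost surely, hence also conditionally) and $\E[D_i \mid Z_0, \ldots, Z_{i-1}] = \E[Z_i \mid Z_0, \ldots, Z_{i-1}] - Z_{i-1} \leq 0$ by the supermartingale hypothesis (pulling the measurable $Z_{i-1}$ out of the conditional expectation), so the conditional form of the one-step estimate yields $\E\left[e^{\lambda D_i} \midd Z_0, \ldots, Z_{i-1}\right] \leq e^{\lambda^2 c_i^2 / 2}$. Writing $M_k = \E\left[e^{\lambda \sum_{i=1}^k D_i}\right]$, the tower property then gives
\[
M_k = \E\left[e^{\lambda \sum_{i=1}^{k-1} D_i}\, \E\left[e^{\lambda D_k} \midd Z_0, \ldots, Z_{k-1}\right]\right] \leq e^{\lambda^2 c_k^2 / 2}\, M_{k-1},
\]
and since $M_0 = 1$, induction yields $M_n \leq \exp\big(\tfrac{\lambda^2}{2}\sum_{i=1}^n c_i^2\big)$. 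Combining with the Markov step, $\PP[Z_n - Z_0 \geq \eps] \leq \exp\big(-\lambda \eps + \tfrac{\lambda^2}{2}\sum_i c_i^2\big)$, and choosing $\lambda = \eps / \sum_{i=1}^n c_i^2$ gives exactly $\exp\big(-\eps^2/(2\sum_i c_i^2)\big)$.

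I expect the only real obstacle to be the one-step estimate: getting the constant $c^2/2$ in the exponent exactly right (a cruder bound on $e^{\lambda D}$ loses a constant factor, so one must route through $\cosh$ and the series comparison), and handling the supermartingale---rather than martingale---case, for which the trick is that the sign-definite linear term can simply be dropped. The rest is standard bookkeeping: state the one-step estimate in conditional form so that it can be applied inside nested expectations, and collapse those expectations via the tower property. A mild integrability assumption on the $Z_i$---implicit already in the statement of the supermartingale condition---is used throughout.
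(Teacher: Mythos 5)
The paper states Theorem~\ref{thm:azuma} only as background---it cites it but never proves it, proving instead the multiplicative variant (Theorem~\ref{thm:main}). So there is no in-paper proof to compare against line by line. That said, your argument is correct and is the standard proof of Azuma--Hoeffding: Markov's inequality on $e^{\lambda(Z_n-Z_0)}$, a one-step Hoeffding lemma proved by convexity (interpolating $e^{\lambda t}$ linearly over $[-c,c]$, dropping the sign-definite $\E[D]\sinh(\lambda c)/c$ term, and bounding $\cosh(t)\leq e^{t^2/2}$ by the termwise series comparison $(2k)! \geq 2^k k!$), peeling off increments with the tower property, and optimizing at $\lambda = \eps/\sum_i c_i^2$. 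All steps check out, including the measurability of $e^{\lambda\sum_{i<k}D_i}$ with respect to $\sigma(Z_0,\ldots,Z_{k-1})$ needed for the tower step.

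It is worth noting that your proof is structurally parallel to the paper's proof of Theorem~\ref{thm:main}: both route through Markov on the moment generating function, a one-step convexity lemma (your Hoeffding estimate versus the paper's Lemma~\ref{lem:ineq}), and an iterative/tower argument. The difference lies entirely in the one-step lemma. You symmetrize the range to $[-c,c]$ and collapse to $\cosh(\lambda c)\leq e^{\lambda^2 c^2/2}$, which yields the Gaussian (additive) tail $\exp(-\eps^2/2\sum c_i^2)$. The paper's Lemma~\ref{lem:ineq} instead keeps the asymmetric range $[-a_i,b_i]$ and retains the sharper estimate $\exp\bigl(\tfrac{a}{a+b}(e^{t(a+b)}-1)-ta\bigr)$, which is what produces the Poisson-type multiplicative tail when the $a_i$'s are small relative to $c$. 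In other words, the exact place where you ``lose information'' by symmetrizing is precisely the place the paper exploits to get its stronger bound.
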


By applying Azuma's inequality to the exposure martingale for a sum
$\sum_i X_i$ of random variables, one arrives at the following
corollary, which is often useful in analyzing randomized algorithms (for direct applications of Corollary \ref{cor:azuma}, see, e.g., \cite{komm2014randomized,casteigts2019design,levi2014local,chakrabarty2016widetilde,cesa2004generalization,duchi2011dual}).

\begin{cor}
  Let $X_1, X_2, \ldots, X_n$ be random variables satisfying
  $X_i \in [0, c_i]$. Suppose that
  $\E[X_i \mid X_1, \ldots, X_{i - 1}] \le p_i$ for all $i$. Then for any $\eps>0$,
  \[\PP\left[\sum_i X_i \geq \sum_i p_i + \eps\right] \leq \exp\left(-\frac{\eps^2}{2\sum_{i=1}^nc_i^2}\right).\]

  \label{cor:azuma}
\end{cor}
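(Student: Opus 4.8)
The plan is to derive the corollary from Azuma's inequality (Theorem~\ref{thm:azuma}) by feeding it the exposure martingale of $\sum_i X_i$. Before that, I would observe that it suffices to prove the statement when $p_i \le c_i$ for every $i$: lowering any $p_i$ to $c_i$ can only decrease $\sum_i p_i$, hence can only enlarge the event $\big\{\sum_i X_i \ge \sum_i p_i + \eps\big\}$, while the right-hand side of the claimed bound does not involve the $p_i$ at all; and the hypothesis $\E[X_i \mid X_1, \dots, X_{i-1}] \le p_i$ is preserved under this change because $X_i \le c_i$ already forces $\E[X_i \mid X_1, \dots, X_{i-1}] \le c_i$. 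Combined with $\E[X_i \mid X_1, \dots, X_{i-1}] \ge 0$, which gives $p_i \ge 0$, we may thus assume $p_i \in [0, c_i]$, so that $X_i - p_i \in [-c_i, c_i]$.

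Next I would define $Z_0 = 0$ and $Z_i = \sum_{j=1}^{i}(X_j - p_j)$ for $i \ge 1$. Then $Z_n - Z_0 = \sum_i X_i - \sum_i p_i$, and the bounded-difference hypothesis of Theorem~\ref{thm:azuma} holds with constants $c_i$, since $|Z_i - Z_{i-1}| = |X_i - p_i| \le c_i$ by the previous paragraph. The remaining task is to check that $(Z_i)$ is a supermartingale in the precise sense used by Theorem~\ref{thm:azuma}, namely $\E[Z_i \mid Z_0, \dots, Z_{i-1}] \le Z_{i-1}$. Because $Z_0, \dots, Z_{i-1}$ are determined by $X_1, \dots, X_{i-1}$, the tower property gives $\E[X_i \mid Z_0, \dots, Z_{i-1}] = \E\big[\,\E[X_i \mid X_1, \dots, X_{i-1}]\,\big|\,Z_0, \dots, Z_{i-1}\,\big] \le p_i$; since $Z_{i-1}$ is a function of $Z_0, \dots, Z_{i-1}$, this gives $\E[Z_i \mid Z_0, \dots, Z_{i-1}] = Z_{i-1} + \E[X_i - p_i \mid Z_0, \dots, Z_{i-1}] \le Z_{i-1}$, as needed.

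Applying Theorem~\ref{thm:azuma} to $Z_0, \dots, Z_n$ with the bounds $c_1, \dots, c_n$ then yields $\PP[Z_n - Z_0 \ge \eps] \le \exp\big(-\eps^2 / (2\sum_{i=1}^n c_i^2)\big)$, which, after substituting $Z_n - Z_0 = \sum_i X_i - \sum_i p_i$, is exactly the inequality claimed. Everything here is routine except for one point, which I expect to be the only real subtlety: Theorem~\ref{thm:azuma} is phrased with conditioning on $Z_0, \dots, Z_{i-1}$, a coarser piece of information than the full history $X_1, \dots, X_{i-1}$ on which the corollary's hypothesis is stated, so the supermartingale property must be obtained via an application of the tower property rather than directly.
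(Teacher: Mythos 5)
Your proof is correct and takes exactly the route the paper gestures at (and spells out in detail for its own Corollary~\ref{cor:main}): define $Z_i = \sum_{j\le i}(X_j - p_j)$, verify the bounded-difference and supermartingale conditions, and apply Theorem~\ref{thm:azuma}. The preliminary reduction to $p_i \le c_i$ is a sound and necessary touch, since Theorem~\ref{thm:azuma} requires the symmetric bound $|Z_i - Z_{i-1}| \le c_i$, and the tower-property step to pass from conditioning on $X_1,\dots,X_{i-1}$ to conditioning on $Z_0,\dots,Z_{i-1}$ is also handled correctly.
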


\paragraph{This paper: an inequality with multiplicative error.} Azuma's inequality can be viewed as a generaliation of \emph{additive}
Chernoff bounds. In this paper, we formulate the multiplicative
analog to Azuma's inequality. As we will discuss later, this result can also be viewed as a corollary of other known results \cite{freedman1975tail, new1, new2, new3}.  

\begin{restatable}{theorem}{MAINTHM}
Let $Z_0, Z_1, \ldots, Z_n$ be a supermartingale, meaning that $\EE[Z_i\mid Z_0, \ldots, Z_{i-1}]\leq Z_{i-1}$. Assume additionally that $-a_i\leq Z_i-Z_{i-1}\leq b_i$, where $a_i+b_i=c$ for some constant $c>0$ independent of $i$. Let $\mu=\sum_{i=1}^na_i$. Then for any $\delta >0$, 
\[\PP[Z_n-Z_0\geq\delta\mu]\leq \exp\left(-\frac{\delta^2\mu}{(2+\delta)c}\right).\]
\label{thm:main}
\end{restatable}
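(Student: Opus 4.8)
The plan is to follow the exponential-moment (Chernoff-bound) method used to prove the multiplicative Chernoff bound, after a change of variables that exploits the uniform interval width $a_i+b_i=c$. First I would set $Y_i=(Z_i-Z_{i-1})+a_i$. The hypothesis $-a_i\le Z_i-Z_{i-1}\le b_i$ then becomes $Y_i\in[0,c]$, and the supermartingale property becomes $\E[Y_i\mid Z_0,\ldots,Z_{i-1}]\le a_i$. Since $\sum_{i=1}^n Y_i=(Z_n-Z_0)+\mu$, the target event $\{Z_n-Z_0\ge\delta\mu\}$ is precisely $\{\sum_i Y_i\ge(1+\delta)\mu\}$, so the theorem reduces to a multiplicative upper-tail bound on $\sum_i Y_i$ with ``effective count'' $\mu/c$; in the special case $Y_i\in\{0,1\}$ (i.e.\ $c=1$) this is exactly the setup of the classical bound.

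For $\lambda>0$, Markov's inequality gives $\Pr[\sum_i Y_i\ge(1+\delta)\mu]\le e^{-\lambda(1+\delta)\mu}\,\E[e^{\lambda\sum_i Y_i}]$, so the crux is bounding the moment generating function. I would use convexity of $x\mapsto e^{\lambda x}$ on $[0,c]$: pointwise $e^{\lambda Y_i}\le 1+\tfrac{Y_i}{c}(e^{\lambda c}-1)$, and since $e^{\lambda c}-1\ge 0$, taking conditional expectations and using $\E[Y_i\mid Z_0,\ldots,Z_{i-1}]\le a_i$ yields $\E[e^{\lambda Y_i}\mid Z_0,\ldots,Z_{i-1}]\le 1+\tfrac{a_i}{c}(e^{\lambda c}-1)\le\exp\!\big(\tfrac{a_i}{c}(e^{\lambda c}-1)\big)$. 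This is exactly where $a_i+b_i=c$ (rather than merely $\le c$) matters: it forces a common exponent $\lambda c$ and a factor scaling with $a_i/c$, so that peeling off increments from $Z_n$ down to $Z_0$ via the tower property multiplies these bounds into the clean estimate $\E[e^{\lambda\sum_i Y_i}]\le\exp\!\big(\tfrac{\mu}{c}(e^{\lambda c}-1)\big)$.

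Combining, $\Pr[Z_n-Z_0\ge\delta\mu]\le\exp\!\big(\tfrac{\mu}{c}\big(e^{\lambda c}-1-\lambda c(1+\delta)\big)\big)$. Writing $t=\lambda c>0$ and minimizing $e^t-1-t(1+\delta)$ gives $t=\ln(1+\delta)$ (valid since $\delta>0$), producing the familiar expression $\big(e^\delta/(1+\delta)^{1+\delta}\big)^{\mu/c}$. The proof then closes with the standard analytic inequality $e^\delta/(1+\delta)^{1+\delta}\le\exp(-\delta^2/(2+\delta))$ for $\delta>0$, equivalently $\ln(1+\delta)\ge 2\delta/(2+\delta)$, which follows by noting equality at $\delta=0$ and comparing derivatives (the needed $(2+\delta)^2\ge 4(1+\delta)$ is just $\delta^2\ge 0$); raising to the power $\mu/c$ gives the claimed bound.

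I do not anticipate a real obstacle here: the single idea is the normalization $Y_i=(Z_i-Z_{i-1})+a_i$, which turns the uniform-width hypothesis into a telescoping product of clean per-step moment bounds, and everything afterward is the textbook Chernoff computation. The points needing the most care are (i) keeping all conditioning on $Z_0,\ldots,Z_{i-1}$ so the tower property can be applied increment by increment, and (ii) verifying the final numeric inequality for all $\delta>0$, not just small $\delta$.
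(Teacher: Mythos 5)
Your proposal is correct and takes essentially the same route as the paper: a chord/convexity bound on the conditional moment generating function of each increment, peeled off via the tower property, followed by the standard Chernoff optimization and the elementary inequality $\ln(1+\delta)\ge 2\delta/(2+\delta)$. The only difference is cosmetic---you shift to $Y_i=(Z_i-Z_{i-1})+a_i\in[0,c]$ and bound $\E[e^{\lambda Y_i}\mid\cdot]$ directly, whereas the paper states the equivalent per-increment estimate as a standalone lemma on $X$ with $\E[X]\le 0$, $-a\le X\le b$.
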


This theorem yields the following corollary.
\begin{restatable}{cor}{MAINCOR}
Let $X_1, \ldots, X_n \in [0, c]$ be real-valued random variables with $c>0$. Suppose that $\E[X_i \mid X_1, \ldots, X_{i - 1}] \le a_i$ for all $i$. Let  $\mu=\sum_{i=1}^na_i$. Then for any $\delta >0$, 
\[\PP\left[\sum_i X_i\geq (1 + \delta)\mu \right]\leq \exp\left(-\frac{\delta^2\mu}{(2+\delta)c}\right).\]
\label{cor:main}
\end{restatable}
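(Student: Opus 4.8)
The plan is to derive Corollary~\ref{cor:main} from Theorem~\ref{thm:main} by passing to the natural \emph{exposure martingale} of the partial sums. Define $Z_0 = 0$ and, for $1 \le i \le n$, let $Z_i = \sum_{j=1}^{i}(X_j - a_j)$, so that $Z_i - Z_{i-1} = X_i - a_i$. Each of $Z_0, \ldots, Z_{i-1}$ is a function of $X_1, \ldots, X_{i-1}$, hence $\sigma(Z_0, \ldots, Z_{i-1}) \subseteq \sigma(X_1, \ldots, X_{i-1})$; combining the tower property with the hypothesis $\E[X_i \mid X_1, \ldots, X_{i-1}] \le a_i$ then gives
\[
  \E[Z_i \mid Z_0, \ldots, Z_{i-1}] = Z_{i-1} + \E[X_i - a_i \mid Z_0, \ldots, Z_{i-1}] \le Z_{i-1},
\]
so $Z_0, \ldots, Z_n$ is a supermartingale.

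Next I would verify the bounded-increment condition of Theorem~\ref{thm:main}. Since $X_i \in [0, c]$, we have $Z_i - Z_{i-1} = X_i - a_i \in [-a_i,\, c - a_i]$. One may assume $a_i \le c$ for all $i$: the trivial bound $\E[X_i \mid X_1, \ldots, X_{i-1}] \le c$ always holds, so replacing each $a_i$ by $\min(a_i, c)$ keeps every hypothesis intact while only shrinking $\mu$, and a brief calculation comparing $\delta^2\mu/((2+\delta)c)$ before and after the replacement (the new $\delta$ grows exactly enough to offset the shrinkage of $\mu$) shows the resulting inequality is at least as strong as the one claimed. Granting $a_i \le c$, put $b_i = c - a_i \ge 0$; then $-a_i \le Z_i - Z_{i-1} \le b_i$ with $a_i + b_i = c$, which is precisely the hypothesis of Theorem~\ref{thm:main} with mean parameter $\mu = \sum_{i=1}^{n} a_i$.

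Theorem~\ref{thm:main} then yields, for every $\delta > 0$,
\[
  \PP[Z_n - Z_0 \ge \delta \mu] \le \exp\left(-\frac{\delta^2 \mu}{(2+\delta)c}\right).
\]
To conclude, I would observe that $Z_n - Z_0 = \sum_{j=1}^{n} X_j - \mu$, so the event $\{Z_n - Z_0 \ge \delta\mu\}$ is identical to $\bigl\{\sum_j X_j \ge (1+\delta)\mu\bigr\}$, and the displayed bound is exactly the assertion of Corollary~\ref{cor:main}.

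I do not anticipate a genuine obstacle here: all of the substance lies in Theorem~\ref{thm:main}, and this corollary is a routine exposure-martingale translation. The only mildly delicate points are the reduction to the case $a_i \le c$ and the measurability bookkeeping that lets one condition on $Z_0, \ldots, Z_{i-1}$ in place of $X_1, \ldots, X_{i-1}$; the former is a non-issue in applications, where $a_i$ is always taken to be at most the trivial bound $c$, and so could simply be assumed outright.
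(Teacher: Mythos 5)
Your proposal is correct and follows essentially the same route as the paper: define the exposure martingale $Z_i=\sum_{j\le i}(X_j-a_j)$, check the supermartingale and bounded-increment conditions, invoke Theorem~\ref{thm:main}, and translate $Z_n-Z_0\ge\delta\mu$ back into $\sum_i X_i\ge(1+\delta)\mu$. The detour reducing to the case $a_i\le c$ is harmless but unnecessary: Lemma~\ref{lem:ineq}, and hence Theorem~\ref{thm:main}, only needs $a+b=c>0$ (not $b\ge0$), since the linear-interpolation argument and the evaluation at $\EE[X]=0$ remain valid upper bounds even when $b<0$, so one can take $b_i=c-a_i$ directly without any preprocessing.
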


In the same way that multiplicative Chernoff bounds are in some cases much stronger than additive Chernoff bounds, the multiplicative Azuma's inequality is in some cases much stronger than the standard (additive) Azuma's inequality, as occurs, in particular, when $\sum_i a_i \ll c n$.

Our work is targeted
towards algorithm designers. Our hope is that Theorem \ref{thm:main}
will simplify the task of analyzing randomized algorithms, providing
an instrument that can be used in place of custom Chernoff bounds and
ad-hoc combinatorial arguments.

\paragraph{Extensions.}
We present two extensions of Theorem \ref{thm:main} and Corollary
\ref{cor:main}.

In Section \ref{sec:app2}, we generalize Theorem \ref{thm:main} so
that $a_1, a_2, \ldots, a_n$ are determined by an \emph{adaptive
  adversary}. This means that each $a_i$ can be partially a function
of $Z_0, \ldots, Z_{i - 1}$. As long as the $a_i$'s are restricted to
satisfy $\sum_i a_i \le \mu$, then the bound from Theorem
\ref{thm:main} continues to hold. We also discuss several applications
of the adaptive version of the theorem.

In Appendix \ref{sec:app}, we extend Theorem \ref{thm:main} to give a
lower tail bound. In particular, just as Theorem
\ref{thm:main} gives an upper tail bound for supermartingales, a
similar approach gives a lower tail bound for submartingales (also
with multiplicative error).

\paragraph{An application: work stealing. } 
In order to demonstrate the power of Theorem \ref{thm:main} we revisit
a classic result in multithreaded scheduling. In the problem of
scheduling multithreaded computations on parallel computers, a
fundamental question is how to decide when one processor should
``steal'' computational threads from another. In the seminal paper,
\emph{Scheduling Multithreaded Computations by Work Stealing}
\cite{BlumofeLe99}, Blumofe and Leiserson presented the first provably
good work-stealing scheduler for multithreaded computations with
dependencies. The paper has been influential to both theory and
practice, amassing almost two thousand citations, and inspiring the
Cilk Programming language and runtime system \cite{BlumofeJoKuLe96,
  FrigoLeRa98, SchardlLeLe18}.

One result in \cite{BlumofeLe99} is an analysis of the so-called
$(P, M)$-recycling game\footnote{Not to be confused with the ball recycling
game of \cite{BenderChCoFa19}.}, which models the contention incurred by a
randomized work-stealing algorithm. By combining the analysis of the
$(P, M)$-recycling game with a delay-sequence argument, the authors
are able to bound the execution time and communication cost of their
randomized work-stealing algorithm.

The $(P, M)$-recycling game takes place on $P$ bins which are
initially empty. In each step of the game, if there are $k$ balls
presently in the bins, then the player selects some value
$j \in \{0, 1, \ldots, P - k\}$ and then tosses $j$ balls at random
into bins. At the end of each step, one ball is removed from each
non-empty bin. The game continues until $M$ total tosses have been made. The goal of the player is to
maximize the \emph{total delay} experienced by the balls, where the
delay of a ball $b$ thrown into a bin $i$ is defined to be the number
of balls already present in bin $i$ at the time of $b$'s throw. Lemma
6 of \cite{BlumofeLe99} states that, even if the player is an adaptive
adversary, the total delay is guaranteed to be at most
$O(M + P \log P + P \log \epsilon^{-1})$ with probability at least
$1 - \epsilon$.

In part due to lack of good analytical tools, the authors of
\cite{BlumofeLe99} attempt to analyze the $(P, M)$-recycling game via
a combinatorial argument. Unfortunately, the argument fails to notice
certain subtle (but important) dependencies between random variables,
and consequently the analysis is incorrect.\footnote{We thank Charles Leiserson of MIT, one of the original authors of \cite{BlumofeLe99}, for
  suggesting that the analysis in \cite{BlumofeLe99} should be
  revisited, which led us to write this paper.}

In Section \ref{sec:application}, we give a simple and short analysis
of the $(P, M)$-recycling using Theorem \ref{thm:main}. We also
explain why the same argument does not follow from the standard
Azuma's inequality. In addition to being simpler (and more correct)
than the analysis in \cite{BlumofeLe99}, our analysis enables the
slightly stronger bound of $O(M + P \log \epsilon^{-1})$.

\subsection{Related Work}
Although Chernoff bounds are often attributed to Herman Chernoff, they
were originally formulated by Herman Rubin (see discussion in
\cite{Chernoff14}). Azuma's inequality, on the other hand, was
independently formulated by several different authors, including
Kazuoki Azuma \cite{Azuma67}, Wassily Hoeffding \cite{Hoeffding94},
and Sergei Bernstein \cite{Bernstein1} (although in a slightly
different form). As a result, the inequality is sometimes also
referred to as the Azuma-Hoeffding inequality.

The key technique used to prove Azuma's inequality is to apply
Markov's inequality to the moment generating function of a random
variable. This technique is well understood and has served as the
foundation for much of the work on concentration inequalities in
statistics and probability theory \cite{Deviations1, Deviations2,
  Deviations3, Deviations4, Deviations5, Deviations6, Deviations7,
  Deviations8, Deviations9, Deviations10, Deviations11, Deviations12,
  Deviations13, Deviations14, McDiarmid89} (see \cite{DeviationsBook}
or \cite{chung2006concentration} for a survey).  Extensive work has
been devoted to generalizing Azuma's inequality in various ways. For
example, Bernstein-type inequalities parameterize the concentration
bound by the $k$th moments of the random variables being summed
\cite{Bernstein1, Bernstein2, Deviations1, Deviations2, Deviations3,
  Deviations4, Deviations5, Deviations6, Deviations7,
  Deviations8}. Most of the research in this direction has been
targeted towards applications in statistics and probability theory,
rather than to theoretical computer science.

The main contribution of this paper is to explicitly formulate the multiplicative analogue of Azuma's inequality, and to discuss its application within algorithm analysis. We emphasise that the proof of the inequality is not, in itself, a substantial contribution, since the inequality is relatively straightforward to derive by combining the proof of the multiplicative Chernoff bound with that of Azuma's inequality. Nonetheless, by presenting the theorem as a tool that can be directly referenced by algorithm designers, we hope to simplify the task of proving concentration bounds within the context of algorithm analysis. 

Besides Azuma's inequality, there are several other generalizations of
Chernoff bounds that are used in algorithm analysis. Chernoff-style
bounds have been shown to apply to sums of random variables that are
negatively associated, rather than independent, and several works have
developed useful techniques for identifying when random variables are
negatively associated \cite{NA1, NA2, NA3, NA4}. Another common
approach is to show that a sum $X$ of not necessarily independent random variables
is stochastically dominated by a sum $X'$ of independent random
variables (see Lemma 3 of \cite{Dominance}), thereby allowing for the
application of Chernoff bounds to $X$.

\vspace{.3 cm}

\noindent\textbf{Addendum: } Since writing the original version of this paper, we have learned of several other references whose results imply (either implicitly or explicitly) variations of the concentration bounds in this paper. Freedman (in Proposition 2.1 of \cite{freedman1975tail}) gives a concentration bound in terms of the variance of each martingale difference (this can be viewed as a variation of Bernstein's inequality \cite{Bernstein1, Bernstein2, Deviations1, Deviations2, Deviations3,
  Deviations4, Deviations5, Deviations6, Deviations7,
  Deviations8}). Using the fact that, for $[0, 1]$-random variable $X$, we have $\operatorname{Var}(X) \le \E[X]$, one can recover from this an adaptive version of Azuma's inequality as a corollary. Several other sources also prove results that have the multiplicative Azuma's inequality as direct corollaries; see, e.g., Lemma 10 of \cite{new1}, Theorem 2.2 of \cite{new2}, and the results in \cite{new3}.

\section{Multiplicative Azuma's Inequality}

In this section we prove the following theorem and corollary. 

\MAINTHM*

\MAINCOR*

\bigskip

\noindent We start our proof by establishing a simple inequality.

\begin{lemma}
For any $t>0$ and any random variable $X$ such that $\EE[X]\leq0$ and $-a\leq X\leq b$, 
\[\EE\left[e^{tX}\right]\leq\exp\left(\frac{a}{a+b}\left(e^{t(a+b)}-1\right)-ta\right).\]
\label{lem:ineq}
\end{lemma}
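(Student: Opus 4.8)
The plan is to follow the classical two-step recipe behind Hoeffding's lemma, but to leave the exponential term ``unexpanded'' in the way one does when proving the multiplicative Chernoff bound. First I would exploit the convexity of the map $x \mapsto e^{tx}$ on the interval $[-a,b]$: writing each $x \in [-a,b]$ as the convex combination $x = \frac{b-x}{a+b}(-a) + \frac{x+a}{a+b}(b)$ (the two coefficients lie in $[0,1]$ because $x \in [-a,b]$) gives the pointwise bound $e^{tx} \le \frac{b-x}{a+b}e^{-ta} + \frac{x+a}{a+b}e^{tb}$. Taking expectations yields $\mathbb{E}[e^{tX}] \le \frac{b-\mathbb{E}[X]}{a+b}e^{-ta} + \frac{\mathbb{E}[X]+a}{a+b}e^{tb}$, which is linear, hence monotone, in $\mathbb{E}[X]$.

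Second, I would invoke the hypothesis $\mathbb{E}[X] \le 0$. The coefficient of $\mathbb{E}[X]$ in the bound above is $\frac{e^{tb}-e^{-ta}}{a+b}$, which is nonnegative since $t>0$ and $a+b>0$ (the interval $[-a,b]$ being nondegenerate; the degenerate case $a+b=0$ forces $X\equiv 0$ and is trivial). Thus the right-hand side is nondecreasing in $\mathbb{E}[X]$ and is maximized at $\mathbb{E}[X]=0$, giving $\mathbb{E}[e^{tX}] \le \frac{b}{a+b}e^{-ta} + \frac{a}{a+b}e^{tb}$.

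Third, I would massage this into the target form. Factoring out $e^{-ta}$ and using $\frac{b}{a+b} = 1 - \frac{a}{a+b}$ rewrites the bound as $e^{-ta}\bigl(1 + \frac{a}{a+b}(e^{t(a+b)}-1)\bigr)$; then the elementary inequality $1+y \le e^y$, applied with $y = \frac{a}{a+b}(e^{t(a+b)}-1) \ge 0$, produces exactly $\exp\bigl(\frac{a}{a+b}(e^{t(a+b)}-1) - ta\bigr)$, which is the claim.

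I do not expect a genuine obstacle; every step is a routine manipulation. The only point requiring care is the sign bookkeeping in the second step — checking that $a+b>0$ so that the linear bound is \emph{increasing} (not decreasing) in $\mathbb{E}[X]$, which is what justifies substituting the worst case $\mathbb{E}[X]=0$. This lemma then plugs into the proof of Theorem~\ref{thm:main} in the usual martingale fashion: apply it to each increment $Z_i - Z_{i-1}$ conditioned on $Z_0,\ldots,Z_{i-1}$ (taking $a=a_i$, $b=b_i$, so $a+b=c$), multiply the resulting moment-generating-function bounds, apply Markov's inequality to $e^{t(Z_n - Z_0)}$, and optimize the choice of $t$ exactly as in the multiplicative Chernoff bound.
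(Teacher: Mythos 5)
Your proposal is correct and follows essentially the same route as the paper's proof: bound $e^{tx}$ by the secant line through $(-a,e^{-ta})$ and $(b,e^{tb})$ via convexity, take expectations, observe the resulting expression is nondecreasing in $\E[X]$ so the worst case is $\E[X]=0$, and finish with $1+y\le e^y$. The only (welcome) difference is that you explicitly check the sign of the coefficient of $\E[X]$ to justify the monotonicity, a detail the paper leaves implicit.
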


\begin{proof}
Consider the linear function $f$ defined on $[-a, b]$ that passes through points $(-a, e^{-ta})$ and $(b, e^{tb})$. Since $e^{tx}$ is convex, Jensen's inequality states that $f$ upper bounds $e^{tx}$, implying that $\EE[e^{tX}]\leq\EE[f(X)]$. Since $f$ is linear, $\EE[f(X)]$ only depends on $\EE[X]$, and one can derive that 
\[\EE[f(X)]=\frac{b-\EE[X]}{a+b}e^{-ta}+\frac{a+\EE[X]}{a+b}e^{tb}.\]
This quantity is maximized when $\EE[X]$ is maximized at $\EE[X]=0$. Therefore,
\begin{flalign*}
\EE[e^{tX}]&\leq\frac{b}{a+b}e^{-ta}+\frac{a}{a+b}e^{tb}\\
&=e^{-ta}\left(1+\frac{a}{a+b}\left(e^{t(a+b)}-1\right)\right)\\
&\leq\exp\left(\frac{a}{a+b}\left(e^{t(a+b)}-1\right)-ta\right).
\end{flalign*}
\end{proof}

\begin{proof}[Proof of Theorem \ref{thm:main}]
By Markov's inequality, for any $t>0$ and $v$,
\begin{flalign}
\PP[Z_n-Z_0\geq v]&=\PP\left[e^{t(Z_n-Z_0)}\geq e^{tv}\right] \nonumber \\
&\leq\frac{\EE[e^{t(Z_n-Z_0)}]}{e^{tv}}. \label{eq:Markov}
\end{flalign}
Let $X_i=Z_i-Z_{i-1}$. Since $Z_i$ is a supermartingale, for any $i$, $\EE[X_i\mid Z_0, \ldots, Z_{i-1}]\leq 0$. Moreover, from the assumptions in the problem, $-a_i\leq X_i\leq b_i$. Therefore, Lemma \ref{lem:ineq} applies to $X=(X_i\mid Z_0, \ldots,  Z_{i-1})$, and we have
\begin{equation}
\EE[e^{tX_i}\mid Z_0, \ldots, Z_{i-1}]\leq\exp\left(\frac{a_i}{c}\left(e^{tc}-1\right)-ta_i\right). \label{eq:ineqcor}
\end{equation}
In the following derivation, which will involve expectations of expectations, it will be important to understand which random variables each expectation is taken over. We will adopt the notation $\E_S[f(S)]$ to denote an expectation taken over a set of random variables $S$. Using \eqref{eq:ineqcor} along with the law of total expectation, which states that $\EE_{X, Y}[A]=\EE_X[\EE_Y[A|X]]$ for any random variable $A$ that is a function of random variables $X$ and $Y$, we derive
\begin{flalign*}
\EE_{Z_0,X_1,\ldots,X_{i-1},X_i}\left[\prod_{j=1}^ie^{tX_j}\right]
&=\EE_{Z_0,X_1,\ldots,X_{i-1}}\left[\EE_{X_i}\left[\prod_{j=1}^ie^{tX_j}\midd Z_0,X_1,\ldots,X_{i-1}\right]\right]\\
&=\EE_{Z_0,X_1,\ldots X_{i-1}}\left[\left(\prod_{j=1}^{i-1}e^{tX_j}\right)\EE_{X_i}\left[e^{X_i}\midd Z_0, X_1 \ldots, X_{i-1}\right]\right]\\
&=\EE_{Z_0,X_1,\ldots X_{i-1}}\left[\left(\prod_{j=1}^{i-1}e^{tX_j}\right)\EE_{X_i}\left[e^{X_i}\midd Z_0, Z_1, \ldots, Z_{i-1}\right]\right]\\
&\leq\EE_{Z_0,X_1,\ldots X_{i-1}}\left[\left(\prod_{j=1}^{i-1}e^{tX_j}\right)\exp\left(\frac{a_i}{c}\left(e^{tc}-1\right)-ta_i\right)\right]\\
&=\exp\left(\frac{a_i}{c}\left(e^{tc}-1\right)-ta_i\right)\EE_{Z_0,X_1,\ldots X_{i-1}}\left[\prod_{j=1}^{i-1}e^{tX_j}\right].
\end{flalign*}
By applying the above inequality iteratively, we arrive at the following:
\begin{flalign*}
\EE[e^{t(Z_n-Z_0)}]&=\EE_{Z_0, X_1, \ldots, X_n}\left[\prod_{i = 1}^n e^{t X_i} \right]\\
&\leq\prod_{i = 1}^n \exp\left(\frac{a_i}{c}(e^{tc}-1)-ta_i\right)\\
&=\exp\left(\frac\mu c\left(e^{tc}-1\right)-t\mu\right).
\end{flalign*}
By \eqref{eq:Markov}, we have
\[\PP[Z_n-Z_0\geq v]\leq\exp\left(\frac{\mu}{c}\left(e^{tc}-1\right)-t\mu-tv\right).\]
Plugging in $t=(\ln(1+\delta))/c$ and $v=\delta\mu$ for $\delta>0$ yields
\begin{flalign*}
\PP[Z_n-Z_0\geq\delta\mu]\leq&\exp\left(\frac{\mu\delta}{c}-\frac{\mu}{c}\ln(1+\delta)-\frac{\mu}{c}\delta\ln(1+\delta)\right)\\
=&\exp\left(\frac{\mu}{c}\left(\delta-(1+\delta)\ln(1+\delta)\right)\right).
\end{flalign*}
For any $\delta>0$,
\[\delta-(1+\delta)\ln(1+\delta)\leq-\frac{\delta^2}{2+\delta}, \]
which can be seen by inspecting the derivative of both sides.\footnote{Consider $f(x)=x/(1+x)-\ln(1+x)+x^2/((1+x)(2+x))$. Then $f(0)=0$ and $f'(x)=-x^2/\left((1+x)(2+x)^2\right)\leq 0$ for $x\geq 0$. Therefore, $f(x)\leq 0$ for $x\geq 0$, and the inequality holds for $\delta>0$.}
As a result,
\[\PP[Z_n-Z_0\geq\delta\mu]\leq\exp\left(-\frac{\delta^2\mu}{(2+\delta)c}\right).\]

\end{proof}

\begin{remark}
A stronger but more unwieldy bound may sometimes be helpful. By skipping the approximation of $\delta-(1+\delta)\ln(1+\delta)$, we derive
\[\PP[Z_n-Z_0\geq\delta\mu]\leq\left(\frac{e^\delta}{(1+\delta)^{(1+\delta)}}\right)^{\mu/c}.\]
\end{remark}

We conclude the section by proving Corollary \ref{cor:main}.

\begin{proof}[Proof of Corollary \ref{cor:main}]
Define $Z_i=\sum_{j=1}^i(X_j-a_j)$. Note that $Z_i-Z_{i-1}=X_i-a_i$. The given condition 
\[\EE[X_i\mid X_1,\ldots, X_{i-1}]\leq a_i\]
implies that
\[\EE[Z_i-Z_{i-1}\mid Z_0,\ldots, Z_{i-1}]=\EE[Z_i-Z_{i-1}\mid X_1,\ldots, X_{i-1}]\leq 0\]
and thus that $Z_i$ is a supermartingale. Moreover, as each $X_i\in[0, c]$, we have that $Z_i-Z_{i-1}\geq-a_i$, $Z_i-Z_{i-1}\geq c-a_i$. Setting $\mu=\sum_{i=1}^na_i$, Theorem \ref{thm:main} implies
\[\PP[Z_n-Z_0\geq\delta\mu]\leq\exp\left(-\frac{\delta^2\mu}{(2+\delta)c}\right).\]
We may break down $Z_n-Z_0$ as
\begin{flalign*}
Z_n-Z_0&=\sum_{i=1}^n(Z_i-Z_{i-1})\\
&=\sum_{i=1}^n(X_i-a_i)\\
&=\sum_{i=1}^nX_i-\mu.
\end{flalign*}
Therefore,
\[\PP\left[\sum_{i=1}^nX_i\geq(1+\delta)\mu\right]\leq\exp\left(-\frac{\delta^2\mu}{(2+\delta)c}\right).\]
\end{proof}

\section{Analyzing the $(P, M)$-Recyling Game}\label{sec:application}

In this section we revisit the analysis of the $(P, M)$-recycling game given in \cite{BlumofeLe99}. We begin by defining the game and explaining why the analysis given in \cite{BlumofeLe99} is incorrect. Then we apply Theorem \ref{thm:main} to obtain a simple and correct analysis.

\subsection{Defining the Game}

The $(P, M)$-recycling game is a combinatorial game, in which balls labelled $1$ to $P$ are tossed at random into $P$ bins. Initially, all $P$ balls are in a reservoir separate from the $P$ bins. At each step of the game, the player executes the following two operations in sequence: 

\begin{enumerate}
\item The player chooses some of the balls in the reservoir (possibly all and possibly none). For each of these balls, the player removes it from the reservoir, selects one of the $P$ bins uniformly and independently at random, and tosses the ball into it.

\item The player inspects each of the $P$ bins in turn, and for each bin that contains at least one ball, the player removes any one of the balls in the bin and returns it to the reservoir. 
\end{enumerate}
The player is permitted to make a total of $M$ ball tosses. The game ends when $M$ ball tosses have been made and all balls have been removed from the bins and placed back in the reservoir. The player is allowed to base their strategy (how many/which balls to toss) depending on outcomes from previous turns. 

After each step $t$ of the game, there are some number $n_t$ of balls left in the bins. The \emph{total delay} is defined as $D = \sum_{t=1}^T n_t$, where $T$ is the total number of steps in the game. Equivalently, if we define the \emph{delay} of a ball $b$ being tossed into a bin $i$ to be the number of balls already present in bin $i$ at the time of the toss, then the total delay is the sum of the delays of all ball tosses.

We would like to give high probability bounds on the total delay, no matter what strategy the player takes.

\subsection{An Incorrect Analysis of the Recycling Game}

The following bound is given by \cite{BlumofeLe99}.
\begin{theorem}[Lemma 6 in \cite{BlumofeLe99}]
  For any $\eps>0$, with probability at least $1-\eps$, the total delay in the $(P, M)$-recycling game is $O(M+P\log P+P\log\eps^{-1})$.
  \label{thm:recyclingold}
\end{theorem}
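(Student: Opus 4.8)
The plan is to sidestep the combinatorial reasoning of \cite{BlumofeLe99} completely and obtain the bound --- in fact, the stronger bound $O(M+P\log\eps^{-1})$, which immediately implies Theorem~\ref{thm:recyclingold} --- by a direct application of the multiplicative Azuma's inequality in the form of Corollary~\ref{cor:main}.

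First I would decompose the delay over the individual tosses. Index the $M$ ball tosses $1,2,\dots,M$ in chronological order, and let $X_i$ be the delay of the $i$-th toss, i.e.\ the number of balls in its destination bin at the moment it is thrown; as noted when the game is defined, $D=\sum_{i=1}^M X_i$. (If one wants to check this, track the potential $\Phi=\sum_{\text{bins }\beta}\binom{m_\beta}{2}$, where $m_\beta$ is the current occupancy of bin $\beta$: a toss raises $\Phi$ by exactly its own delay, the cleanup at the end of step $t$ lowers $\Phi$ by exactly $n_t$, and $\Phi$ is $0$ at the start and at the end of the game, so $\sum_i X_i=\sum_t n_t=D$.)

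The crux is to control each $X_i$ given the past. Conditioning on the entire history of the game before toss $i$ --- every earlier outcome and every adaptive choice the player has made --- let $K_i$ be the number of balls currently in the bins; since there are only $P$ balls and at least one of them sits in the reservoir awaiting toss $i$, we have $K_i\le P-1$, and since the $i$-th ball lands in a uniformly random bin, the conditional expectation of the occupancy it sees is $K_i/P\le 1$. By the law of total expectation this gives $\E[X_i\mid X_1,\dots,X_{i-1}]\le 1$, and trivially $X_i\in[0,P]$. Corollary~\ref{cor:main}, applied with $c=P$, with $a_i=1$ for every $i$ (so $\mu=M$), then yields for every $\delta>0$
\[
  \PP\left[D\ge(1+\delta)M\right]\le\exp\left(-\frac{\delta^2 M}{(2+\delta)P}\right).
\]
To conclude I would choose $\delta=\Theta\!\left(\sqrt{P\log\eps^{-1}/M}+P\log\eps^{-1}/M\right)$, which makes the right-hand side at most $\eps$; then $(1+\delta)M=M+\Theta\!\left(\sqrt{MP\log\eps^{-1}}+P\log\eps^{-1}\right)=O(M+P\log\eps^{-1})$ using the AM--GM inequality $\sqrt{MP\log\eps^{-1}}\le\frac{1}{2}\!\left(M+P\log\eps^{-1}\right)$.

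The only place where care is genuinely required --- and exactly where the analysis in \cite{BlumofeLe99} goes wrong --- is the conditional-expectation step: the $X_i$ are far from independent, since the player reacts adaptively to earlier tosses, so an ordinary Chernoff bound is unavailable, and one must condition on the \emph{full} history (not merely on $X_1,\dots,X_{i-1}$) before bounding $\E[X_i\mid\cdot]$; Corollary~\ref{cor:main} is built to absorb precisely this dependence. I would also point out why the ordinary additive Azuma's inequality is not enough here: with $X_i\in[0,P]$, Corollary~\ref{cor:azuma} gives only $D\le M+O(P\sqrt{M\log\eps^{-1}})$, which is weaker by a $\sqrt{M}$ factor --- it is the multiplicative refinement that replaces $P\sqrt{M}$ by $\sqrt{PM}+P$ and thereby removes the need for any extra slack such as the $P\log P$ term.
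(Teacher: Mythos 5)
Your proposal is correct and takes essentially the same route as the paper: the paper likewise proves the strengthened $O(M+P\log\eps^{-1})$ bound (Theorem~\ref{thm:recyclingnew}) by decomposing $D=\sum_i X_i$, observing $X_i\in[0,P]$ and $\E[X_i\mid X_1,\dots,X_{i-1}]\le(P-1)/P\le 1$, and invoking Corollary~\ref{cor:main} with $a_i=1$, $c=P$, $\mu=M$. The only cosmetic difference is in tuning $\delta$: you use a single formula $\delta=\Theta(\sqrt{P\log\eps^{-1}/M}+P\log\eps^{-1}/M)$ plus AM--GM, while the paper splits into the two cases $M\ge P\ln(1/\eps)$ (take $\delta=2$) and $M\le P\ln(1/\eps)$ (take $\delta=2P\ln(1/\eps)/M$), arriving at the same $D\le 3M+2P\ln(1/\eps)$ threshold.
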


In order to prove Theorem \ref{thm:recyclingold}, the authors \cite{BlumofeLe99} sketch a complicated combinatorial analysis of the game. Define the indicator random variable $x_{ir}$ to be $1$ if the $i$th toss of ball $1$ is delayed by ball $r$, and $0$ otherwise. A key component in the analysis \cite{BlumofeLe99} is to show that\footnote{In fact, the analysis requires a somewhat stronger property to be shown. But for simplicity of exposition, we focus on this simpler variant.}, for any set $R \subseteq [P]$ of balls,
\begin{equation}
\Pr[x_{ir} \text{ for all } r \in R] \le P^{-|R|}.
\label{eq:recyclingfalse}
\end{equation}

Unfortunately, due to subtle dependencies between the random variables $x_{ir}$, \eqref{eq:recyclingfalse} is not true (or even close to true). To see why, suppose that the player (i.e., the adversary) takes the following strategy: Throw balls $A = \{2, 3, \ldots, P\}$ in the first step. If the balls in $A$ do not land in the same bin, then wait $P - 1$ steps until all bins are empty, and throw the balls in $A$ again. Continue rethrowing until there is some step $t$ in which all of the balls in $A$ land in the same bin. At the end of step $t$, remove ball $2$, leaving balls $3, 4, \ldots, P$ in the same bin as each other. Then on step $t + 1$ perform the first throw of ball $1$.

If $M$ is sufficiently large so that all balls in $A$ almost certainly land together before the process ends, then the probability that the first throw of ball $1$ lands in the same bin as balls $3, 4, \ldots, P$ is approximately $1/{P}$. In contrast, \eqref{eq:recyclingfalse} claims to bound the same probability by  $1 / P^{P - 2}$.

The difficulty of proving Theorem \ref{thm:recyclingold} via an ad-hoc combinatorial argument is further demonstrated by another error in \cite{BlumofeLe99}'s analysis. Throughout the proof, the authors define $m_i$ to be the number of times that ball $i$ is thrown, and then treat each $m_i$ as taking a fixed value. In actuality, however, the $m_i$'s are random variables that are partially controlled by an adversary (i.e., the player of the game), meaning that the outcomes of the $m_i$'s may be linked to the outcomes of the $x_{i, r}$'s. This consideration adds even further dependencies that must be considered in order to obtain a correct analysis.

\subsection{A Simple and Correct Analysis Using Multiplicative Azuma's Inequality}

We now give a simple (and correct) analysis of the $(P, M)$-recycling game using the multiplicative version of Azuma's inequality. In fact, we prove a slightly stronger bound than Theorem \ref{thm:recyclingold}.

\begin{theorem}
  For any $\eps>0$, with probability at least $1-\eps$, the total delay in the $(P, M)$-recycling game is $O(M+P\log(1/\eps))$.
  \label{thm:recyclingnew}
\end{theorem}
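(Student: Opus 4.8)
The plan is to apply Corollary~\ref{cor:main} with the random variables being the delays of the \emph{individual ball tosses}, taken in chronological order. Without loss of generality, assume the player makes exactly $M$ tosses over the course of the game (if fewer tosses are made, append fictitious tosses of delay $0$). For $k \in \{1, \ldots, M\}$, let $Y_k$ denote the delay incurred by the $k$th toss, i.e.\ the number of balls present in the destination bin at the moment of that toss. By the equivalent description of total delay given above, $D = \sum_{k=1}^M Y_k$. Moreover each $Y_k$ lies in $[0, P-1] \subseteq [0, P]$, since at the time of any toss at most $P-1$ of the $P$ balls are sitting in bins (the ball being tossed is not).

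The key step is to show that $\E[Y_k \mid Y_1, \ldots, Y_{k-1}] \le 1$. Condition on the entire history of the game up through the moment just before the $k$th toss; this includes the configuration of the bins, all of the player's (adaptive) decisions so far, and in particular the values $Y_1, \ldots, Y_{k-1}$. By the rules of the game the ball is now thrown into one of the $P$ bins chosen uniformly at random and independently of this history, so if $m \le P-1$ balls are currently in the bins then $\E[Y_k \mid \text{history}] = m/P \le (P-1)/P \le 1$. Averaging over histories consistent with a fixed outcome of $Y_1, \ldots, Y_{k-1}$ gives $\E[Y_k \mid Y_1, \ldots, Y_{k-1}] \le 1$, which is exactly the hypothesis of Corollary~\ref{cor:main} with $a_k = 1$ and $c = P$. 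Crucially, this argument needs no independence among the $Y_k$'s and no control over how many times each ball is thrown --- the adaptivity of the player is handled automatically --- which is precisely the feature missing from the combinatorial argument of \cite{BlumofeLe99}.

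With $\mu = \sum_{k=1}^M a_k = M$ and $c = P$, Corollary~\ref{cor:main} yields, for every $\delta > 0$,
\[\PP[D \ge (1+\delta)M] \le \exp\left(-\frac{\delta^2 M}{(2+\delta)P}\right).\]
It remains to choose $\delta$. Writing $v = \delta M$, the right-hand side equals $\exp\bigl(-v^2/((2M+v)P)\bigr)$; taking $v = 2M + 2P\ln(1/\eps)$ forces $v \ge 2M$, hence $2M + v \le 2v$ and the exponent is at most $-v/(2P) = -(M/P + \ln(1/\eps)) \le -\ln(1/\eps)$. Therefore, with probability at least $1 - \eps$, we have $D < M + v = 3M + 2P\ln(1/\eps) = O(M + P\log(1/\eps))$, as claimed.

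I do not anticipate a genuine obstacle here: the only real content is the decision to decompose the delay by individual tosses and the observation that each toss has conditional expected delay at most $(P-1)/P$; everything else is a direct invocation of Corollary~\ref{cor:main} and a one-line optimization of $\delta$. It is worth remarking, however, that applying the \emph{additive} Azuma's inequality (Corollary~\ref{cor:azuma}) to the same $Y_k$'s would only give a deviation term of order $P\sqrt{M\ln(1/\eps)}$ (since there $\sum_k c_k^2 = MP^2$), which is far weaker than $O(M + P\log(1/\eps))$ whenever $M \gg P$; the multiplicative version is exactly what makes the clean bound fall out.
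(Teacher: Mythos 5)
Your proposal is correct and follows essentially the same route as the paper: decompose the total delay as $D=\sum_k Y_k$ over individual tosses, observe that $Y_k\in[0,P]$ and $\E[Y_k\mid Y_1,\ldots,Y_{k-1}]\le (P-1)/P\le 1$ by conditioning on the full history, then invoke Corollary~\ref{cor:main} with $a_k=1$, $c=P$, $\mu=M$. The only (cosmetic) difference is in how the final bound is extracted: you make a single substitution $v=2M+2P\ln(1/\eps)$, while the paper argues by cases on whether $M$ is above or below $P\ln(1/\eps)$; both give exactly $D<3M+2P\ln(1/\eps)$ with probability $1-\eps$.
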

\begin{proof}
For $i=1, 2, \ldots, M$, define the delay $X_i$ of the $i$th toss to be the number of balls in the bin that the $i$th toss lands in, not counting the $i$th toss itself. The total delay can be expressed as $D=\sum_{i=1}^MX_i$. 

As the player's strategy can adapt to the outcomes of previous tosses,
the $X_i$'s may have complicated dependencies. Nonetheless, since
there are at most $P - 1$ balls present at time of the $i$th toss, we
know that $X_i \in [0, P]$. Moreover, since the toss selects a bin
$\{1, 2, \ldots, P\}$ at random, each ball present at the time of the
toss has probability $1 / P$ of contributing to the delay $X_i$. Thus,
no matter the outcomes of $X_1, \ldots, X_{i - 1}$, we have that
$\E[X_i \mid X_1, \ldots, X_{i - 1}] \le (P - 1)/{P} \le 1$. We
can therefore apply Corollary \ref{cor:main}, with $a_i = 1$ for all
$i$ and $c = P$, to deduce that
\begin{equation}
  \Pr[D \ge (1+\delta) M] \le \exp\left(-\frac{\delta^2M}{(2+\delta)P}\right).
  \label{eq:cor}
\end{equation}

If $M\geq P\ln(1/\eps)$, we may substitute $\delta=2$ into \eqref{eq:cor} to derive
$\PP[D\geq 3M]\leq\exp\left(-M / P\right)\leq\eps$.
If $M\leq P\ln(1/\eps)$, we may instead substitute $\delta=2P\ln(1/\eps)/M$. As $\delta\geq2$, we have $\delta/(2+\delta)\geq 1/2$, and we derive
$\PP\left[D\geq M+2P\ln (1/\eps) \right]\leq\exp\left(- \delta M / (2P) \right)=\eps$.

In either case, $\PP\left[D\geq 3M+2P\ln (1/\eps)\right]\leq\eps$,
which proves the theorem statement.
\end{proof}

\subsection{Why Standard Azuma's Inequality Does Not Suffice}

In order to fully understand the proof of Theorem
\ref{thm:recyclingnew}, it is informative to consider what happens if
we attempt to use (the standard) Azuma's inequality to analyze
$D = \sum_{i = 1}^M X_i$. Applying Corollary \ref{cor:azuma} with
$c_i = P$ for all $i$, we get that
\begin{equation}
    \Pr[D > (1+\delta)M] \le \exp\left(- \frac{(\delta M)^2}{2 M P^2} \right) = \exp\left(- \frac{\delta^2 M}{2P^2} \right) .
    \label{eq:badbound}
\end{equation}
In contrast, for $\delta \ge 2$, Corollary \ref{cor:main} gives a bound of
\begin{equation}
    \Pr[D > (1+\delta)M] \le \exp\left(- \frac{\delta^2 M}{(2 + \delta)P} \right) \le \exp\left(- \frac{\delta M}{2P} \right).
    \label{eq:goodbound}
\end{equation}
Since $D \le PM$ trivially, the interesting values for $\delta$ are $\delta \le P$. On the other hand, for all $\delta$ satisfying $2 \le \delta < P$, the bound given by \eqref{eq:goodbound} is stronger than the bound given by \eqref{eq:badbound}.
The reason that the multiplicative version of Azuma's does better than the additive version is that the random variables $X_i$ have quite small means, meaning that the $a_i$'s used by the multiplicative bound are much smaller than the $c_i$'s used by the additive bound. When $\delta$ is a constant, this results in a full factor-of-$\Theta(P)$ difference in the exponent achieved by the two bounds. It is not possible to derive a $O(M+P\log\eps^{-1})$ high probability bound with \eqref{eq:badbound} alone.

\section{Adversarial Multiplicative Azuma's Inequality}\label{sec:app2}

In this section, we extend Theorem \ref{thm:main} and Corollary \ref{cor:main} in order to allow for the values $a_1, a_2, \ldots, a_n$ and $b_1, b_2, \ldots, b_n$ to be random variables that are determined adaptively. Formally, we define the supermartingale $Z_0, \ldots, Z_n$ with respect to a filtration, and then defining $a_1, a_2, \ldots, a_n$ and $b_1, b_2, \ldots, b_n$ to be predictable processes with respect to that same filtration.

The statement of Theorem \ref{thm:adv} uses several notions that are standard in probability theory (see, e.g., \cite{Roch15} and \cite{Billingsley08} for formal definitions) but less standard in theoretical computer science. 

\begin{theorem}
Let $Z_0, \ldots Z_n$ be a supermartingale with respect to the filtration $F_0, \ldots, F_{n}$, and let $A_1, \ldots, A_n$ and $B_1, \ldots, B_n$ be predictable processes with respect to the same filtration. Suppose there exist values $c>0$ and $\mu$, satisfying that $-A_i\leq Z_i-Z_{i-1}\leq B_i$, $A_i+B_i=c$, and $\sum_{i=1}^nA_i\leq\mu$ (almost surely). Then for any $\delta>0$,
\[\PP[Z_n-Z_0\geq\delta\mu]\leq\exp\left(-\frac{\delta^2\mu}{(2+\delta)c}\right).\]
\label{thm:adv}
\end{theorem}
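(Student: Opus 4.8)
The plan is to redo the moment-generating-function argument from the proof of Theorem~\ref{thm:main} directly, now treating $A_i$ and $B_i$ as $F_{i-1}$-measurable random variables rather than constants; a reduction to Theorem~\ref{thm:main} by conditioning on the adaptively chosen values $A_1,\dots,A_n$ does not obviously work, since $Z$ need not remain a supermartingale under such conditioning. As in \eqref{eq:Markov}, I would start from Markov's inequality: for any $t>0$ and any $v$, $\Pr[Z_n-Z_0\ge v]\le \EE[e^{t(Z_n-Z_0)}]/e^{tv}$. Writing $X_i=Z_i-Z_{i-1}$, the supermartingale property gives $\EE[X_i\mid F_{i-1}]\le 0$, and by hypothesis $-A_i\le X_i\le B_i$ with $A_i+B_i=c$. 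Conditioning on $F_{i-1}$ turns $A_i,B_i$ into constants, so the conditional form of Lemma~\ref{lem:ineq} (whose proof carries over verbatim with conditional expectations) gives
\[
\EE[e^{tX_i}\mid F_{i-1}]\le \exp\left(\frac{A_i}{c}\left(e^{tc}-1\right)-tA_i\right)\qquad\text{almost surely.}
\]

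Next I would bound $\EE[e^{t(Z_n-Z_0)}]=\EE\left[\prod_{i=1}^n e^{tX_i}\right]$ by peeling off factors from the outside in. Since $\prod_{i=1}^{n-1}e^{tX_i}$ is $F_{n-1}$-measurable and --- crucially --- so is the bound $\exp\left(\frac{A_n}{c}(e^{tc}-1)-tA_n\right)$ (this is where predictability of $A_n$ is used), the tower property gives $\EE\left[\prod_{i=1}^n e^{tX_i}\right]\le \EE\left[\left(\prod_{i=1}^{n-1}e^{tX_i}\right)\exp\left(\frac{A_n}{c}(e^{tc}-1)-tA_n\right)\right]$. Iterating this $n$ times collapses the product of exponentials and leaves $\EE\left[\exp\left(\left(\frac{e^{tc}-1}{c}-t\right)\sum_{i=1}^n A_i\right)\right]$. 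The genuinely new step is to dispose of this surviving expectation: since $t>0$ and $c>0$ we have $e^{tc}-1\ge tc$, so the coefficient $\frac{e^{tc}-1}{c}-t$ is nonnegative; combined with the almost-sure bound $\sum_{i=1}^n A_i\le \mu$ and monotonicity of $\exp$, this yields $\EE[e^{t(Z_n-Z_0)}]\le \exp\left(\frac{\mu}{c}(e^{tc}-1)-t\mu\right)$.

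This is exactly the moment-generating-function bound reached in the proof of Theorem~\ref{thm:main}, so I would conclude identically: substitute $t=(\ln(1+\delta))/c$ and $v=\delta\mu$ into the Markov bound, and apply the elementary inequality $\delta-(1+\delta)\ln(1+\delta)\le -\delta^2/(2+\delta)$ to obtain $\Pr[Z_n-Z_0\ge \delta\mu]\le \exp\left(-\frac{\delta^2\mu}{(2+\delta)c}\right)$. I do not anticipate a deep obstacle; the two points that need care are the measurability bookkeeping in the iteration (ensuring each per-step bound is $F_{i-1}$-measurable so the tower property applies) and the justification of the conditional version of Lemma~\ref{lem:ineq}, both of which are routine consequences of $A_i$ and $B_i$ being predictable.
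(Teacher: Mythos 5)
There is a real gap in the ``Iterating this $n$ times'' step, and it is exactly the subtlety that makes the adaptive case harder than Theorem~\ref{thm:main}. Your first peel is fine: conditioning on $F_{n-1}$ and applying the conditional form of Lemma~\ref{lem:ineq} gives
\[
\EE\left[\prod_{i=1}^n e^{tX_i}\right]\le \EE\left[\left(\prod_{i=1}^{n-1}e^{tX_i}\right)g(A_n)\right],\qquad \text{where } g(a):=\exp\left(\tfrac{a}{c}(e^{tc}-1)-ta\right).
\]
But now you are stuck. The next peel requires conditioning on $F_{n-2}$ to isolate $X_{n-1}$, which yields $\EE\left[\left(\prod_{i=1}^{n-2}e^{tX_i}\right)\EE\left[e^{tX_{n-1}}g(A_n)\mid F_{n-2}\right]\right]$. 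The factor $g(A_n)$ is only $F_{n-1}$-measurable, not $F_{n-2}$-measurable, so it cannot be pulled out of the inner expectation; and $X_{n-1}$ and $A_n$ are in general correlated given $F_{n-2}$ (the adversary chooses $A_n$ after seeing $X_{n-1}$), so you cannot factor $\EE[e^{tX_{n-1}}g(A_n)\mid F_{n-2}]$ into $\EE[e^{tX_{n-1}}\mid F_{n-2}]\cdot\EE[g(A_n)\mid F_{n-2}]$ either. Consequently the claim $\EE[\prod_i e^{tX_i}]\le\EE[\exp((\tfrac{e^{tc}-1}{c}-t)\sum_i A_i)]$ is not established by the argument given, and I do not see how to repair it while keeping the same ``collapse to an expectation of $\exp(\ldots\sum A_i)$'' target. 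When the $a_i$ are constants (Theorem~\ref{thm:main}), $g(a_i)$ is a scalar and comes cleanly out of every conditional expectation, which is precisely why this complication does not arise there.

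Two correct ways to finish. The paper's route (Lemma~\ref{lem:keyadv}) runs a \emph{forward} induction on $n$: condition on $F_1$, apply the inductive hypothesis to $Z_1,\ldots,Z_n$ with the reduced budget $\mu-A_1$ (valid because $\sum_{i\ge2}A_i\le\mu-A_1$), obtaining an $F_1$-conditional bound $\exp(\tfrac{\mu-A_1}{c}(e^{tc}-1)-t(\mu-A_1))$ that is $F_0$-measurable, and then multiply by the one-step bound $g(A_1)$ — here the terms involving $A_1$ cancel because $A_1$ is $F_0$-measurable. Alternatively, and closest in spirit to what you wrote, define the nonnegative supermartingale $M_k=\prod_{i=1}^k e^{tX_i}/g(A_i)$; predictability of $A_k$ makes $g(A_k)$ $F_{k-1}$-measurable, so $\EE[M_k\mid F_{k-1}]\le M_{k-1}$ and hence $\EE[M_n]\le 1$. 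Then use the almost-sure bound $\prod_i g(A_i)\le\exp(\tfrac{\mu}{c}(e^{tc}-1)-t\mu)$ \emph{inside} the expectation, not on its own, to get $\EE[\prod_i e^{tX_i}]=\EE\left[M_n\prod_i g(A_i)\right]\le\exp(\tfrac{\mu}{c}(e^{tc}-1)-t\mu)$. The rest of your argument (choice of $t$, the elementary inequality $\delta-(1+\delta)\ln(1+\delta)\le-\delta^2/(2+\delta)$) is correct and matches the paper.
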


\begin{cor}
  Suppose that Alice constructs a sequence of random variables $X_1, \ldots X_n$, with $X_i\in[0, c], c>0$, using the following iterative process. Once the outcomes of $X_1, \ldots, X_{i - 1}$ are determined, Alice then selects the probability distribution $\mathcal{D}_i$ from which $X_i$ will be drawn; $X_i$ is then drawn from distribution $\mathcal{D}_i$. Alice is an adaptive adversary in that she can adapt $\mathcal{D}_i$ to the outcomes of $X_1, \ldots, X_{i - 1}$. The only constraint on Alice is that $\sum_i \E[X_i \mid \mathcal{D}_i] \le \mu$, that is, the sum of the means of the probability distributions $\mathcal{D}_1, \ldots, \mathcal{D}_n$ must be at most $\mu$.

If $X = \sum_i X_i$, then for any $\delta>0$,
\[\PP[X\geq(1+\delta)\mu]\leq\exp\left(-\frac{\delta^2\mu}{(2+\delta)c}\right).\]
\label{cor:adv}
\end{cor}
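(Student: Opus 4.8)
The plan is to deduce this corollary from Theorem~\ref{thm:adv}, in the same way that Corollary~\ref{cor:main} was deduced from Theorem~\ref{thm:main}, by building an appropriate supermartingale and predictable processes out of Alice's sequence.

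First I would fix the filtration $F_0 \subseteq F_1 \subseteq \cdots \subseteq F_n$ with $F_0$ trivial and $F_i = \sigma(X_1, \dots, X_i)$. Since Alice selects $\mathcal{D}_i$ as a function of $X_1, \dots, X_{i-1}$, the mean $A_i := \mathbb{E}[X_i \mid \mathcal{D}_i]$ of $\mathcal{D}_i$ is $F_{i-1}$-measurable, so $A_1, \dots, A_n$ is a predictable process and so is $B_i := c - A_i$; moreover $A_i \in [0,c]$ since $\mathcal{D}_i$ is supported on $[0,c]$. Then I would set $Z_0 := 0$ and $Z_i := \sum_{j=1}^i (X_j - A_j)$ and check the three hypotheses of Theorem~\ref{thm:adv}: (i) $\mathbb{E}[Z_i - Z_{i-1} \mid F_{i-1}] = \mathbb{E}[X_i \mid F_{i-1}] - A_i = 0 \le 0$, because conditioning on $F_{i-1}$ determines $\mathcal{D}_i$ and $X_i \sim \mathcal{D}_i$; (ii) $Z_i - Z_{i-1} = X_i - A_i \in [-A_i,\, c - A_i] = [-A_i, B_i]$, with $A_i + B_i = c$; and (iii) $\sum_{i=1}^n A_i = \sum_i \mathbb{E}[X_i \mid \mathcal{D}_i] \le \mu$ almost surely, which is precisely the constraint imposed on Alice. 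Theorem~\ref{thm:adv} then gives $\Pr[Z_n - Z_0 \ge \delta\mu] \le \exp(-\delta^2\mu/((2+\delta)c))$.

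Finally I would pass back to $X$. Since $Z_n - Z_0 = X - \sum_i A_i$ and $\sum_i A_i \le \mu$ almost surely, we have $Z_n - Z_0 \ge X - \mu$ pointwise, so the event $\{X \ge (1+\delta)\mu\}$ is contained (up to a null set) in $\{Z_n - Z_0 \ge (1+\delta)\mu - \mu\} = \{Z_n - Z_0 \ge \delta\mu\}$; combining this inclusion with the bound from the previous step yields $\Pr[X \ge (1+\delta)\mu] \le \exp(-\delta^2\mu/((2+\delta)c))$.

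I do not expect a genuine obstacle here --- the argument is essentially a routine reduction. The two points that need a little care are the measure-theoretic bookkeeping (verifying that the $A_i$ form a predictable process, and that conditioning on $F_{i-1}$ really recovers the distribution $\mathcal{D}_i$), and the fact that Alice may \emph{underspend} her budget so that $\sum_i A_i$ is strictly smaller than $\mu$; because of the latter, one cannot use the exact identity $Z_n - Z_0 = X - \mu$ available in the proof of Corollary~\ref{cor:main} and must instead argue via the event inclusion above.
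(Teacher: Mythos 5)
Your proposal is correct and follows essentially the same reduction as the paper's proof: set $A_i = \E[X_i \mid \mathcal{D}_i]$, $B_i = c - A_i$, $Z_i = \sum_{j\le i}(X_j - A_j)$, verify the hypotheses of Theorem~\ref{thm:adv}, and pass from $\sum_i A_i \le \mu$ to the stated bound via the event inclusion. The only minor difference is the filtration: you take $F_i = \sigma(X_1,\dots,X_i)$, which makes the $A_i$ predictable only when Alice's choice of $\mathcal{D}_i$ is a deterministic function of past outcomes, whereas the paper uses $F_i = \sigma(X_1,\dots,X_i,\mathcal{D}_1,\dots,\mathcal{D}_{i+1})$ to accommodate a randomized adversary directly (though a reduction to deterministic adversaries would also close this gap).
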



\begin{remark} Formally, a filtration $F_0, \ldots, F_{n - 1}$ is a sequence of $\sigma$-algebras such that $F_i \subseteq F_{i + 1}$ for each $i$. Informally, one can simply think of the $F_i$'s as revealing ``random bits''. For each $i$, $F_{i}$ reveals the set of random bits used to determine all of $Z_0, \ldots, Z_i$, $A_0, \ldots, A_i$, and $B_0, \ldots, B_i$. The fact that $Z_0, Z_1, \ldots, Z_n$ is a martingale with respect to $F_0, F_1, \ldots, F_{n - 1}$ means simply that the random bits $F_i$ determine $Z_i$ (that is, $Z_i$ is \emph{$F_i$-measurable}), and that $\E[Z_i \mid F_{i -1}] = Z_{i - 1}$. The fact that $A_1, \ldots, A_n$ and $B_1, \ldots, B_n$ are predictable processes, means simply that each $A_i$ and $B_i$ is determined by the random bits $F_{i-1}$ (that is, $A_i, B_i$ are \emph{$F_{i  - 1}$-measurable}).
\end{remark}
  
To prove Theorem \ref{thm:adv}, we prove the following key lemma:

\begin{lemma}
Let $Z_0, \ldots Z_n$ be a supermartingale with respect to the filtration $F_0, \ldots, F_{n}$, and let $A_1, \ldots, A_n$ and $B_1, \ldots, B_n$ be predictable processes with respect to the same filtration. Suppose there exist values $c>0$ and $\mu$, satisfying that $-A_i\leq Z_i-Z_{i-1}\leq B_i$, $A_i+B_i=c$, and $\sum_{i=1}^nA_i\leq\mu$ (almost surely). Then for any $t>0$,
\[\EE\left[e^{t(Z_n-Z_0)}\midd F_0\right]\leq\exp\left(\frac{\mu}{c}(e^{tc}-1)-t\mu\right).\]
\label{lem:keyadv}
\end{lemma}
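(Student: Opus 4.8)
The plan is to follow the proof of Theorem~\ref{thm:main} closely, applying Lemma~\ref{lem:ineq} conditionally at each step. The only genuinely new feature is that the per-step correction factor now contains the \emph{random} (but $F_{i-1}$-measurable) quantity $A_i$ in place of a constant $a_i$, so the factors can no longer simply be pulled out in front of the whole expectation; instead they must be absorbed step by step into a martingale. Set $\lambda = (e^{tc}-1)/c - t$, which is strictly positive for $t>0$ since $e^{tc} > 1 + tc$, and write $g(x) = \lambda x = \frac{x}{c}(e^{tc}-1) - tx$; thus $g$ is linear with positive slope, and $g(0)=0$. Note that, by Lemma~\ref{lem:ineq}, if a random variable $X$ has (conditional) mean at most $0$ and lies in an interval $[-a,b]$ of length $a+b=c$, then its (conditional) exponential moment $\E[e^{tX}]$ is at most $\exp(g(a))$.

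The key step is to show that the process
\[ M_i \;=\; e^{t(Z_i - Z_0)}\exp\!\left(-\sum_{j=1}^i g(A_j)\right) \]
is a supermartingale with respect to $F_0, \ldots, F_n$. To see this, fix $i$ and condition on $F_{i-1}$. The factors $e^{t(Z_{i-1}-Z_0)}$, $\exp(-\sum_{j=1}^{i-1} g(A_j))$, and $\exp(-g(A_i))$ are all $F_{i-1}$-measurable — the last because $A_1,\ldots,A_n$ is a predictable process — so they come out of $\E[\,\cdot \mid F_{i-1}]$, leaving the factor $\E[e^{t(Z_i - Z_{i-1})}\mid F_{i-1}]$. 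Applying Lemma~\ref{lem:ineq} to $Z_i - Z_{i-1}$ under the conditional law given $F_{i-1}$ — where its mean is at most $0$ by the supermartingale hypothesis and where $-A_i \le Z_i - Z_{i-1} \le B_i$ with $A_i + B_i = c$ — gives $\E[e^{t(Z_i-Z_{i-1})}\mid F_{i-1}] \le \exp(g(A_i))$, which exactly cancels the $\exp(-g(A_i))$. Hence $\E[M_i \mid F_{i-1}] \le M_{i-1}$, and by the tower property $\E[M_n \mid F_0] \le M_0 = 1$.

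To finish, note that almost surely $e^{t(Z_n-Z_0)} = M_n \exp\!\big(\sum_{j=1}^n g(A_j)\big)$, and by linearity of $g$ together with $\sum_j A_j \le \mu$ and $\lambda > 0$,
\[ \sum_{j=1}^n g(A_j) \;=\; \lambda \sum_{j=1}^n A_j \;\le\; \lambda \mu \;=\; g(\mu) \;=\; \frac{\mu}{c}(e^{tc}-1) - t\mu . \]
Since $M_n \ge 0$, we get $e^{t(Z_n-Z_0)} \le M_n \exp(g(\mu))$ pointwise; taking $\E[\,\cdot\mid F_0]$ of both sides and using $\E[M_n\mid F_0]\le 1$ yields exactly the claimed bound.

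I expect the main point requiring care to be the \emph{conditional} application of Lemma~\ref{lem:ineq} with $a = A_i$ and $b = B_i$ random: the lemma as stated is for fixed $a,b$. The resolution is that, after conditioning on $F_{i-1}$, the predictable values $A_i, B_i$ are determined, so the lemma (whose proof is just Jensen's inequality, which has a conditional form) applies on each atom of $F_{i-1}$ and then one integrates; spelling this out carefully is the only nonroutine ingredient. A minor secondary point is integrability of $M_i$, which follows because the increments $Z_j-Z_{j-1}$ are bounded (they lie in $[-A_j,B_j]$ with $A_j,B_j\in[0,c]$), so $e^{t(Z_i-Z_0)}$ is bounded and the correction factors $\exp(-g(A_j))\le 1$ are as well.
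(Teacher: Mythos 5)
Your proof is correct, and it takes a structurally different route from the paper's. You construct the explicit exponential supermartingale $M_i = e^{t(Z_i-Z_0)}\exp\bigl(-\sum_{j\le i} g(A_j)\bigr)$, verify $\E[M_i\mid F_{i-1}]\le M_{i-1}$ by a conditional application of Lemma~\ref{lem:ineq}, and then invoke the almost-sure constraint $\sum_j A_j\le\mu$ exactly once, pointwise, at the very end. The paper instead argues by induction on $n$: it conditions on $F_1$, pulls out the factor $e^{t(Z_1-Z_0)}$, applies the inductive hypothesis to the shifted process $Z'_i=Z_{i+1}$ with filtration $F'_i=F_{i+1}$ and residual budget $\mu-A_1$ (conditionally on $F_0$), and then applies Lemma~\ref{lem:ineq} to the first increment. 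Both arguments encode the same underlying computation — iterated application of Lemma~\ref{lem:ineq} combined with a single use of $\sum_i A_i\le\mu$ — but your formulation avoids the index shift and the bookkeeping of threading the residual budget $\mu - A_1$ through the induction, and it isolates cleanly where the budget constraint enters. Your remarks on the conditional form of Lemma~\ref{lem:ineq} (it applies $F_{i-1}$-conditionally because conditional Jensen holds with $F_{i-1}$-measurable endpoints $A_i, B_i$) and on integrability (all factors are bounded since the increments lie in $[-c,c]$ and $g(A_j)\ge 0$) are exactly the points worth flagging, and you resolve them correctly.
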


\begin{proof}
We proceed by induction on $n$.

\noindent\textbf{The base case.} For $n=0$, $Z_n-Z_0=0$, and for any $c, t>0$, $(e^{tc}-1)/c-t>0$. Therefore, $\mu(e^{tc}-1)/c-t\mu\geq0=t(Z_n-Z_0)$, and the inequality holds.

\noindent\textbf{The inductive step.} Assume that this statement is true for $n-1$, and we shall prove it for $n$. 

The law of total expectation states that for any random variable $X$ and any $\sigma$-algebras $H_1\subseteq H_2$, $\EE[\EE[X\mid H_2]\mid H_1]=\EE[X\mid H_1]$. As $\{F_i\}$ is a filtration, we know $F_{i-1}\subseteq F_i$, and thus
\[\EE\left[e^{t(Z_n-Z_0)}\midd F_0\right]=\EE\left[\EE\left[e^{t(Z_n-Z_0)}\midd F_1\right]\midd F_0\right].\]

Since $e^{t(Z_1-Z_0)}$ is $F_1$-measurable, we can pull it out of the expectation as follows:
\begin{equation}
  \EE\left[\EE\left[e^{t(Z_n-Z_0)}\midd F_1\right]\midd F_0\right]=\EE\left[e^{t(Z_1-Z_0)}\cdot\EE\left[e^{t(Z_n-Z_1)}\midd F_1\right]\midd F_0\right]. \label{eq:F0outofF1}
\end{equation}

Let $Z'_i=Z_{i+1}$, $F'_i=F_{i+1}$, $A'_i=A_{i+1}$, $B'_i=B_{i+1}$. We know that $Z'_0,\ldots, Z'_{n-1}$ is a supermartingale with respect to $F'_0,\ldots,F'_{n-1}$. Additionally, we know $A'_1,\ldots,A'_{n-1}$ and $B'_1,\ldots,B'_{n-1}$ are predictable processes with respect to $F'_0,\ldots,F'_{n-1}$ satisfying that $-A'_i\leq Z'_i-Z'_{i-1}\leq B'_i$, $A'_i+B'_i=c$, and $\sum_{i=1}^{n-1}A'_i\leq\mu-(A_1\mid F_0)$. Therefore, we may apply our inductive hypothesis to derive
\begin{flalign}
\EE\left[e^{t(Z_n-Z_1)}\midd F_1\right] 
&=\EE\left[e^{t(Z'_{n-1}-Z'_0)}\midd F'_0\right] \nonumber \\
&\leq\left(\exp\left(\frac{\mu-A_1}{c}(e^{tc}-1)-t(\mu-A_1)\right)\midd F_0\right). \label{eq:F1ind}
\end{flalign}

Combining \eqref{eq:F0outofF1} and \eqref{eq:F1ind}, we find that
\[\EE\left[e^{t(Z_n-Z_0)}\midd F_0\right]=\EE\left[e^{t(Z_1-Z_0)}\cdot\exp\left(\frac{\mu-A_1}{c}(e^{tc}-1)-t(\mu-A_1)\right)\midd F_0\right].\]
As $A_1$ is $F_0$-measurable, we can pull the exponential term out of the expectation to arrive at
\begin{equation}
  \EE\left[e^{t(Z_n-Z_0)}\midd F_0\right]=\left(\exp\left(\frac{\mu-A_1}{c}(e^{tc}-1)-t(\mu-A_1)\right)\midd F_0\right)\EE\left[e^{t(Z_1-Z_0)}\midd F_0\right].
  \label{eq:missingA}
\end{equation}

Since $Z_i$ is a supermartingale, $\EE[Z_1-Z_0\mid F_0]\leq 0$. Therefore, Lemma \ref{lem:ineq} applies to $X=(Z_1-Z_0\mid F_0)$, $a=(A_1\mid F_0)$, $b=(B_1\mid F_0)$, and we have
\begin{equation}
  \EE[e^{t(Z_1-Z_0)}\mid F_0]\leq\left(\exp\left(\frac{A_1}{c}(e^{tc}-1)-tA_1\right)\midd F_0\right).
  \label{eq:hereAis}
\end{equation}

Combining \eqref{eq:missingA} and \eqref{eq:hereAis}, we have
\[\EE\left[e^{t(Z_n-Z_0)}\midd F_0\right]\leq\left(\exp\left(\frac{\mu}{c}(e^{tc}-1)-t\mu\right)\midd F_0\right)=\exp\left(\frac{\mu}{c}(e^{tc}-1)-t\mu\right).\]

\end{proof}

\begin{proof}[Proof of Theorem \ref{thm:adv}]
By Lemma \ref{lem:keyadv} and the law of total expectation, 
\begin{flalign*}
\EE[e^{t(Z_n-Z_0)}]
&=\EE[\EE[e^{t(Z_n-Z_0)}\mid F_0]]\\
&\leq\EE\left[\exp\left(\frac{\mu}{c}(e^{tc}-1)-t\mu\right)\right]\\
&=\exp\left(\frac{\mu}{c}(e^{tc}-1)-t\mu\right).
\end{flalign*}

The rest of the proof is identical to the proof of Theorem \ref{thm:main}.
\end{proof}

Corollary \ref{cor:adv} is a straightforward application of Theorem \ref{thm:adv}.
\begin{proof}[Proof of Corollary \ref{cor:adv}] Define the filtration $F_0, F_1, \ldots, F_{n }$ by
  $$F_i = \sigma(X_1, X_2, \ldots, X_i, \mathcal{D}_1, \mathcal{D}_2, \ldots,
  \mathcal{D}_{i+1}).$$ That is, $F_i$ is the smallest $\sigma$-algebra with respect to which all of $X_1, X_2, \ldots, X_{i}$ and $\mathcal{D}_1, \mathcal{D}_2, \ldots, \mathcal{D}_{i+1}$ are measurable.

Define $A_i = \E[X_i \mid \mathcal{D}_i]$ to be the expected value of $X_i$ once its distribution is determined, and $B_i = c - A_i$. Define $Z_0, \ldots, Z_n$ to be given by
  $$Z_i = \sum_{j = 1}^i X_i - \sum_{j = 1}^i A_i.$$

Since $A_i$ and $B_i$ are $D_i$-measurable and $F_{i-1}$ contains $D_i$, we know that $A_i$ and $B_i$ are also $F_{i-1}$-measurable, implying that they are predictable processes with respect to filtration $F_0, ..., F_n$.

As each $X_i$ is drawn from distribution $D_i$ after all of $X_1, ..., X_{i-1}$ and $D_1, ..., D_i$ have been determined, we have $\EE[X_i\mid F_{i-1}]=\EE[X_i\mid D_i]$. We can then compute that
\begin{flalign*}
\EE[Z_i\mid F_{i-1}]&=\EE[X_i-A_i+Z_{i-1}\mid F_{i-1}]\\
&=\EE[X_i\mid F_{i-1}]-A_i+Z_{i-1}\\
&=\EE[X_i\mid D_i]-A_i+Z_{i-1}\\
&=Z_{i-1},
\end{flalign*}
implying that $Z_0, ..., Z_n$ is a martingale with respect to filtration $F_0, ..., F_n$.
  
Finally, $\{Z_i\}$, $\{A_i\}$ and $\{B_i\}$ satisfy the requirements of Theorem \ref{thm:adv}, namely that $-A_i \leq Z_i - Z_{i - 1} \leq B_i$, that $A_i + B_i \ = c$, and that $\sum_i A_i \le \mu$. Thus, by Theorem \ref{thm:adv},
  \[\PP[Z_n \geq\delta\mu]\leq\exp\left(-\frac{\delta^2\mu}{(2+\delta)c}\right).\] Expanding out $Z_n$ gives
   \[\PP\left[\sum_i X_i \geq\sum_i A_i + \delta\mu\right]\leq\exp\left(-\frac{\delta^2\mu}{(2+\delta)c}\right),\] and thus we have
   \[\PP\left[\sum_i X_i \geq (1 + \delta)\mu\right]\leq\exp\left(-\frac{\delta^2\mu}{(2+\delta)c}\right),\] as desired.
 \end{proof}

\subsection{Applications of Theorem \ref{thm:adv} in Concurrent Work}

By allowing for an adaptive adversary, Theorem \ref{thm:adv} naturally
lends itself to applications with online adversaries. We conclude the
section by briefly discussing two applications of Theorem
\ref{thm:adv} that have arisen in several of our recent concurrent
works. In both cases, Theorem \ref{thm:adv} significantly simplified
the task of analyzing an algorithm.

\paragraph{Edge orientation in incremental forests} In \cite{edgeapplication}, Bender et al. consider the problem of edge orientation in an incremental forest. In this problem, edges $e_1, e_2, \ldots, e_k$ of a forest arrive one by one, and we are responsible for maintaining an orientation of the edges (i.e., an assignment of directions to the edges) such that every vertex has out-degree at most $O(1)$. As each edge $e_i$ arrives, we may need to flip the orientations of other edges in order to accommodate the newly arrived edge. The goal in \cite{edgeapplication} is to flip at most $O(\log \log n)$ orientations per edge insertion (with high probability). We refer to an edge insertion as a step.

A key component of the algorithm in \cite{edgeapplication} is that vertices may ``volunteer'' to have their out-degree incremented during a given step. During each step $i$, there are $\operatorname{polylog} n$ vertices $S_i$ that are eligible to volunteer, and each of these vertices volunteers with probability $1 / \operatorname{polylog} n$. The algorithm is designed to satisfy the property that each vertex $v$ can appear in at most $O(\log n)$ $S_i$'s.

An essential piece of the analysis is to show that, for any set $S$ of size $\operatorname{polylog} n$, the number of vertices in $S$ that ever volunteer is at most $|S| / 2$ (with high probability). On a given step $i$, the expected number of vertices in $S$ that volunteer is $|S \cap S_i| / \operatorname{polylog} n$. $S_i$ is partially a function of the algorithm's past randomness, and thus $S_i$ are effectively selected by an adaptive adversary, subject to the constraint that each vertex $v$ appears in at most $O(\log n)$ $S_i$'s.  By applying Theorem \ref{thm:adv}, one can deduce that the number of vertices in $S$ that volunteer is small (with high probability).

Note that, since $|S| = \operatorname{polylog} n$, a bound with additive error would not suffice here. Such a bound would allow for the number of vertices that volunteer to deviate by $\Omega(\sqrt{n})$ from its mean, which is larger than $|S|/2$.

\paragraph{Task scheduling against an adaptive adversary}
Another concurrent work to ours \cite{cupapplication} considers a
scheduling problem in which the arrival of new work to be scheduled is
controlled by a (mostly) adaptive adversary. In particular, although
the amount of new work that arrives during each step is fixed (to
$1 - \epsilon$), the tasks to which that new work is assigned are
determined by the adversary. The scheduling algorithm is then allowed
to select a single task to perform $1$ unit of work on. The goal is to
design a scheduling algorithm that prevents the backlog (i.e., the
maximum amount of unfinished work for any task) from becoming too
large.

Due to the complexity of the algorithm in \cite{cupapplication}, we
cannot explain in detail the application of Theorem \ref{thm:adv}. The
basic idea, however, is that the adversary must decide how to allocate
its resources across tasks over time, but that the adversary can adapt
(in an online fashion) to events that it has observed in the
past. Theorem \ref{thm:adv} allows for the authors of
\cite{cupapplication} to obtain Chernoff-style bounds on the number of
a certain ``bad events'' that occur, while handling the adaptiveness
of the adversary.

\section{Acknowledgments}

We thank Charles Leiserson of MIT for suggesting that the proof in \cite{BlumofeLe99} might warrant revisiting, and for giving useful feedback on the manuscript. We also thank Tao B. Schardl of MIT for several helpful conversations, and thank Kevin Yang of UC Berkeley for helpful discussions on probability theory. Finally, we would like to thank Chandra Chekuri and Kent Quanrud for helping identify several important pieces of related work.

\bibliographystyle{plain}
\bibliography{main}

\appendix

\section{Multiplicative Lower Tail Bounds}\label{sec:app}

In this section we prove a lower tail bound with multiplicative error for both the normal and the adversarial setting. Whereas Theorem \ref{thm:main} and Theorem \ref{thm:adv} allow us to bound the probability of a random variable substantially exceeding its mean, Theorem \ref{thm:lower} and Theorem \ref{thm:advlower} allow us to bound the probability of a random variable taking a substantially smaller value than its mean.
 
\begin{theorem}
Let $Z_0, Z_1, \ldots, Z_n$ be a submartingale, meaning that $\EE[Z_i\mid Z_0, \ldots, Z_{i-1}]\geq Z_{i-1}$. Assume additionally that $-a_i\leq Z_i-Z_{i-1}\leq b_i$, where $a_i+b_i=c$ for some constant $c>0$ independent of $i$. Let $\mu=\sum_{i=1}^na_i$. Then for any $0\leq\delta<1$, 
\[\PP[Z_n-Z_0\leq-\delta\mu]\leq \exp\left(-\frac{\delta^2\mu}{2c}\right).\]
\label{thm:lower}
\end{theorem}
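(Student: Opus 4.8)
The plan is to follow the proof of Theorem~\ref{thm:main} almost verbatim, but applied to $e^{-t(Z_n-Z_0)}$ instead of $e^{t(Z_n-Z_0)}$, so that the event ``$Z_n-Z_0$ falls well below its drift'' becomes an upward-deviation event for the exponentiated quantity. The first step is to record the lower-tail analogue of Lemma~\ref{lem:ineq}: for any $t>0$ and any random variable $X$ with $\EE[X]\ge 0$ and $-a\le X\le b$,
\[\EE\left[e^{-tX}\right]\le\exp\left(ta+\frac{a}{a+b}\left(e^{-t(a+b)}-1\right)\right).\]
This does not follow by plugging a negative parameter into Lemma~\ref{lem:ineq} (which requires $t>0$), nor by applying Theorem~\ref{thm:main} to the supermartingale $-Z_i$: the latter reduction would put $\sum_i b_i$ in the exponent in place of $\mu=\sum_i a_i$, which is far weaker when the $a_i$'s are small. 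Instead I would rerun the convexity argument of Lemma~\ref{lem:ineq} with $e^{-tx}$ in place of $e^{tx}$: since $e^{-tx}$ is convex, the chord through $(-a,e^{ta})$ and $(b,e^{-tb})$ dominates it on $[-a,b]$; the expectation of that chord equals $\frac{b-\EE[X]}{a+b}e^{ta}+\frac{a+\EE[X]}{a+b}e^{-tb}$, which is nonincreasing in $\EE[X]$ (since $e^{ta}\ge e^{-tb}$) and hence, using $\EE[X]\ge 0$, is maximized at $\EE[X]=0$; a short rearrangement together with $1+x\le e^x$ then gives the displayed bound.

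With this lemma in hand, set $X_i=Z_i-Z_{i-1}$. The submartingale hypothesis gives $\EE[X_i\mid Z_0,\dots,Z_{i-1}]\ge 0$, and by assumption $-a_i\le X_i\le b_i$ with $a_i+b_i=c$, so the lemma applies conditionally and yields $\EE[e^{-tX_i}\mid Z_0,\dots,Z_{i-1}]\le\exp(ta_i+\frac{a_i}{c}(e^{-tc}-1))$. Iterating through the law of total expectation exactly as in the proof of Theorem~\ref{thm:main} (pull the product of the earlier factors out of the conditional expectation, bound the last factor, repeat) gives $\EE[e^{-t(Z_n-Z_0)}]\le\exp(t\mu+\frac{\mu}{c}(e^{-tc}-1))$. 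Combining this with Markov's inequality applied to the event $e^{-t(Z_n-Z_0)}\ge e^{t\delta\mu}$ gives, for every $t>0$,
\[\PP[Z_n-Z_0\le-\delta\mu]\le\exp\left(t\mu(1-\delta)+\frac{\mu}{c}\left(e^{-tc}-1\right)\right).\]

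It remains to optimize $t$. Taking $t=-\ln(1-\delta)/c$, which is positive for $0<\delta<1$ (the case $\delta=0$ being trivial), makes $e^{-tc}=1-\delta$ and collapses the exponent to $\frac{\mu}{c}\left(-\delta-(1-\delta)\ln(1-\delta)\right)$. The elementary inequality $-\delta-(1-\delta)\ln(1-\delta)\le-\delta^2/2$ on $[0,1)$ — which can be checked by noting that $\phi(\delta):=-\delta-(1-\delta)\ln(1-\delta)+\frac{\delta^2}{2}$ satisfies $\phi(0)=\phi'(0)=0$ and $\phi''(\delta)=-\frac{\delta}{1-\delta}\le0$ — then yields $\PP[Z_n-Z_0\le-\delta\mu]\le\exp(-\delta^2\mu/(2c))$, as desired. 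I expect the only real subtlety to be the lower-tail MGF lemma above: getting the signs right, and in particular making sure that it is $\mu=\sum_i a_i$, the scale of the \emph{downward} steps, and not $\sum_i b_i$, that appears in the exponent — which is exactly why one cannot simply invoke Theorem~\ref{thm:main} on $-Z_i$. Everything after that is a direct transcription of the upper-tail argument.
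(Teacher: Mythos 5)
Your proposal is correct and is essentially the paper's proof with a cosmetic reparameterization: you write $e^{-tX}$ with $t>0$ where the paper uses $e^{tX}$ with $t<0$ (its Lemma~\ref{lem:ineq2}), and you verify $-\delta-(1-\delta)\ln(1-\delta)\le -\delta^2/2$ via the second derivative while the paper checks the first derivative of a rescaled function, but the MGF lemma, the martingale iteration, the choice $e^{-tc}=1-\delta$, and the final bound are identical. Your aside about why one cannot simply feed $-Z_i$ into Theorem~\ref{thm:main} is a nice observation the paper does not spell out.
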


\begin{cor}
Let $X_1, \ldots, X_n \in [0, c]$ be real-valued random variables with $c>0$. Suppose $\E[X_i \mid X_1, \ldots, X_{i - 1}] \geq a_i$ for all $i$. Let  $\mu=\sum_{i=1}^na_i$. Then for any $0\leq\delta<1$, 
\[\PP\left[\sum_i X_i\leq (1 - \delta)\mu \right]\leq \exp\left(-\frac{\delta^2\mu}{2c}\right).\]
\label{cor:lower}
\end{cor}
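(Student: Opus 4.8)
The plan is to obtain Corollary~\ref{cor:lower} from Theorem~\ref{thm:lower} by exactly the device used to pass from Theorem~\ref{thm:main} to Corollary~\ref{cor:main}. Set $Z_0 = 0$ and $Z_i = \sum_{j=1}^{i}(X_j - a_j)$, so that $Z_i - Z_{i-1} = X_i - a_i$. Since $\sigma(Z_0,\ldots,Z_{i-1})$ and $\sigma(X_1,\ldots,X_{i-1})$ coincide, the hypothesis $\E[X_i \mid X_1,\ldots,X_{i-1}] \ge a_i$ becomes $\E[Z_i - Z_{i-1} \mid Z_0,\ldots,Z_{i-1}] \ge 0$, i.e. $Z_0,\ldots,Z_n$ is a submartingale. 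Because $X_i \in [0,c]$, we have $-a_i \le Z_i - Z_{i-1} \le c - a_i$, so with $b_i := c - a_i$ the step condition $-a_i \le Z_i - Z_{i-1} \le b_i$ holds with $a_i + b_i = c$. Applying Theorem~\ref{thm:lower} with $\mu = \sum_i a_i$ gives $\Pr[Z_n - Z_0 \le -\delta\mu] \le \exp(-\delta^2\mu/(2c))$, and since $Z_n - Z_0 = \sum_i (X_i - a_i) = \sum_i X_i - \mu$, the event $\{Z_n - Z_0 \le -\delta\mu\}$ is precisely $\{\sum_i X_i \le (1-\delta)\mu\}$. This yields the statement.

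Since the corollary is this one-line reduction, essentially all of the content sits in Theorem~\ref{thm:lower}, which I would prove by re-running the proof of Theorem~\ref{thm:main} with negative exponents. The naive idea of applying Theorem~\ref{thm:main} to $-Z_i$ fails: negating the increments swaps the roles of $a_i$ and $b_i$ and therefore replaces $\mu$ by $nc - \mu$, so a genuinely separate moment-generating-function computation is required. Fix $t > 0$. Using the convexity of $x \mapsto e^{-tx}$ on $[-a_i,b_i]$ — the same chord/Jensen argument as in Lemma~\ref{lem:ineq}, now noting that the chord's expectation is decreasing in $\E[Z_i - Z_{i-1} \mid Z_0,\ldots,Z_{i-1}]$ and that this quantity is $\ge 0$ — one gets the conditional estimate $\E[e^{-t(Z_i - Z_{i-1})} \mid Z_0,\ldots,Z_{i-1}] \le \exp(ta_i - \frac{a_i}{c}(1 - e^{-tc}))$. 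Chaining these factors through the law of total expectation exactly as in the proof of Theorem~\ref{thm:main} gives $\E[e^{-t(Z_n - Z_0)}] \le \exp(t\mu - \frac{\mu}{c}(1 - e^{-tc}))$, and Markov's inequality then yields $\Pr[Z_n - Z_0 \le -\delta\mu] \le \exp(t(1-\delta)\mu - \frac{\mu}{c}(1 - e^{-tc}))$. Substituting the minimizing value $t = \frac{1}{c}\ln\frac{1}{1-\delta}$ collapses the exponent to $\frac{\mu}{c}(-\delta - (1-\delta)\ln(1-\delta))$, after which the elementary inequality $-\delta - (1-\delta)\ln(1-\delta) \le -\delta^2/2$ for $0 \le \delta < 1$ (e.g.\ from $(1-\delta)\ln(1-\delta) = -\delta + \sum_{k\ge 2}\frac{\delta^k}{k(k-1)} \ge -\delta + \frac{\delta^2}{2}$, or by comparing derivatives) gives the claimed bound.

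I do not expect any genuinely hard step. The only care needed is bookkeeping: keeping the inequality senses straight (submartingale rather than supermartingale, and Markov applied to $e^{-t(Z_n-Z_0)}$ with $t>0$ rather than to $e^{t(Z_n-Z_0)}$), and resisting the tempting-but-invalid reduction of Theorem~\ref{thm:lower} to Theorem~\ref{thm:main} via a sign flip.
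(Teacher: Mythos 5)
Your reduction of Corollary~\ref{cor:lower} to Theorem~\ref{thm:lower} via $Z_i=\sum_{j\le i}(X_j-a_j)$ is exactly the paper's route (the paper says the proof is ``identical to the proof of Corollary~\ref{cor:main}'', which is this same reduction), and your sketch of Theorem~\ref{thm:lower} reproduces the paper's argument up to the cosmetic change of writing $e^{-tX}$ with $t>0$ in place of $e^{tX}$ with $t<0$. Your observation that the sign-flip reduction to Theorem~\ref{thm:main} fails because it swaps $a_i$ and $b_i$ is correct, and your power-series verification of $-\delta-(1-\delta)\ln(1-\delta)\le-\delta^2/2$ is a valid alternative to the paper's derivative check.
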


\begin{theorem}
Let $Z_0, \ldots Z_n$ be a submartingale with respect to the filtration $F_0, \ldots, F_{n}$, and let $A_1, \ldots, A_n$ and $B_1, \ldots, B_n$ be predictable processes with respect to the same filtration. Suppose there exist values $c>0$ and $\mu$, satisfying that $-A_i\leq Z_i-Z_{i-1}\leq B_i$, $A_i+B_i=c$, and $\sum_{i=1}^nA_i\geq\mu$ (almost surely). Then for any $\delta>0$,
\[\PP[Z_n-Z_0\leq-\delta\mu]\leq\exp\left(-\frac{\delta^2\mu}{2c}\right).\]
\label{thm:advlower}
\end{theorem}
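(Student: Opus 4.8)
The plan is to mirror the proof of Lemma~\ref{lem:keyadv} (the adversarial upper-tail key lemma), but working with the moment generating function evaluated at a \emph{negative} exponent, as one does when deriving the lower-tail multiplicative Chernoff bound. First I would establish the analogue of Lemma~\ref{lem:ineq} for negative $t$: for any $t>0$ and any random variable $X$ with $\E[X]\geq 0$ and $-a\leq X\leq b$, one has $\E[e^{-tX}]\leq \exp\!\big(\tfrac{a}{a+b}(e^{-t(a+b)}-1)+ta\big)$. This follows by the identical Jensen / linear-upper-bound argument: $e^{-tx}$ is convex, so the chord through $(-a,e^{ta})$ and $(b,e^{-tb})$ dominates it, $\E$ of the chord depends only on $\E[X]$, and it is now \emph{decreasing} in $\E[X]$, hence maximized at $\E[X]=0$. (Equivalently, apply Lemma~\ref{lem:ineq} to $-X$ with the roles of $a$ and $b$ swapped.)

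Next I would prove the MGF bound $\E[e^{-t(Z_n-Z_0)}\mid F_0]\leq \exp\!\big(\tfrac{\mu}{c}(e^{-tc}-1)+t\mu\big)$ by induction on $n$, exactly paralleling Lemma~\ref{lem:keyadv}. The base case $n=0$ uses that $(e^{-tc}-1)/c + t \geq 0$ for $t,c>0$ (since $e^{-x}\geq 1-x$), so the right side is $\geq 0 = -t(Z_n-Z_0)$. For the inductive step, I would split off the first increment via the filtration, $\E[e^{-t(Z_n-Z_0)}\mid F_0] = \E[e^{-t(Z_1-Z_0)}\,\E[e^{-t(Z_n-Z_1)}\mid F_1]\mid F_0]$; apply the inductive hypothesis to the shifted submartingale $Z_i'=Z_{i+1}$ with constraint $\sum_{i=1}^{n-1}A_i'\geq \mu - A_1$ (here the inequality $\sum A_i\geq\mu$ flips to give the required lower bound on the tail of the sequence, after noting $A_1$ is $F_0$-measurable so can be pulled out); then apply the negative-exponent version of Lemma~\ref{lem:ineq} to the increment $(Z_1-Z_0\mid F_0)$, which satisfies $\E[Z_1-Z_0\mid F_0]\geq 0$ by the submartingale property. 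Multiplying the two bounds telescopes the $A_1$ terms back to $\mu$.

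Then I would finish as in Theorem~\ref{thm:main}: by the law of total expectation $\E[e^{-t(Z_n-Z_0)}]\leq \exp\!\big(\tfrac{\mu}{c}(e^{-tc}-1)+t\mu\big)$, and Markov applied to $e^{-t(Z_n-Z_0)}$ gives, for any $t,v>0$, $\Pr[Z_n-Z_0\leq -v]\leq \exp\!\big(\tfrac{\mu}{c}(e^{-tc}-1)+t\mu - tv\big)$. Plugging in $v=\delta\mu$ and $t=-\ln(1-\delta)/c$ (valid since $0\leq\delta<1$) yields $\Pr[Z_n-Z_0\leq-\delta\mu]\leq \exp\!\big(\tfrac{\mu}{c}(-\delta-(1-\delta)\ln(1-\delta))\big)$, and the elementary inequality $-\delta-(1-\delta)\ln(1-\delta)\leq -\delta^2/2$ for $0\leq\delta<1$ (checked by comparing derivatives, as in the footnote for the upper-tail case) gives the claimed bound.

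The main obstacle is bookkeeping rather than a genuine difficulty: one must track carefully that the constraint $\sum A_i \geq \mu$ (rather than $\leq\mu$) is the ``right'' direction to make the induction close — intuitively, a larger mean makes the lower tail harder to reach, so we need a \emph{lower} bound on the total $A_i$ — and that the sign flips in Lemma~\ref{lem:ineq} (swapping the roles of $a$ and $b$, and the extremal point of the chord) are handled consistently. One should also double-check that no requirement like $\delta<1$ is needed for the MGF bound itself (it is only needed at the final substitution of $t$), and that the $F_0$-measurability of $A_1$ is genuinely what lets the exponential factor be pulled out of $\E[\cdot\mid F_0]$, exactly as in equation~\eqref{eq:missingA}. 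Given Lemma~\ref{lem:ineq}'s flexibility and the clean template of Lemma~\ref{lem:keyadv}, the whole argument is a near-verbatim adaptation.
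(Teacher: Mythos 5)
The approach is natural, but there is a genuine gap that the student does not flag, and the gap is not fixable because the statement itself is incorrect.

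\textbf{The gap in the induction.} In the inductive step you apply the inductive hypothesis to the shifted sequence with $\mu' = \mu - A_1$ (conditioned on $F_0$). For the inductive hypothesis to be meaningful, the base case must hold for that $\mu'$; your base case argument requires $\mu'[(e^{-tc}-1)/c + t]\geq 0$, i.e.\ $\mu'\geq 0$. In the upper-tail proof of Lemma~\ref{lem:keyadv}, $\mu - A_1 \geq 0$ is automatic: there, $\sum_i A_i \leq \mu$ and all $A_i \geq 0$, so $A_1 \leq \mu$. In the lower-tail setting the constraint $\sum_i A_i \geq \mu$ gives \emph{no} upper bound on $A_1$, while the submartingale property only forces $0 \leq A_i \leq c$. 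So whenever $\mu < c$, an individual $A_1$ can exceed $\mu$, making $\mu' = \mu - A_1 < 0$ and invalidating the inductive hypothesis (indeed, for $\mu'<0$ the MGF bound $1 = \E[e^{0}]\leq \exp(\mu'[(e^{-tc}-1)/c+t])<1$ is already false at $n=0$). You explicitly note that the ``flip'' to $\sum A_i \geq \mu$ is ``the right direction to make the induction close'' --- it isn't, and this is precisely the direction that breaks.

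\textbf{The statement is false as written, so the gap cannot be patched.} Take $n=1000$, $c=1$, deterministic $A_i = 0.9$, $B_i = 0.1$, and let each increment $Z_i-Z_{i-1}$ be $-0.9$ with probability $0.1$ and $+0.1$ with probability $0.9$, independently. Then $\E[Z_i - Z_{i-1}\mid F_{i-1}]=0$, so $(Z_i)$ is a martingale, and $\sum_i A_i = 900$. Choose $\mu = 10$ (so $\sum_i A_i \geq \mu$ holds) and $\delta = 0.9$. Writing $K\sim\mathrm{Binom}(1000,0.1)$ for the number of down-steps, $Z_n - Z_0 = 100 - K$, so
\[
\Pr[Z_n - Z_0 \leq -\delta\mu] = \Pr[K \geq 109] \approx 0.18,
\]
by a normal approximation ($K$ has mean $100$ and standard deviation $\approx 9.5$). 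But the claimed bound is $\exp(-\delta^2\mu/(2c)) = \exp(-4.05) \approx 0.017$. The underlying MGF inequality $\E[e^{-t(Z_n-Z_0)}]\leq\exp(\tfrac{\mu}{c}(e^{-tc}-1)+t\mu)$ is also violated (e.g.\ at $t=1$ the left side is about $e^{58.6}$ while the right side is about $e^{3.68}$). The non-adversarial Theorem~\ref{thm:lower} does not suffer from this because there $\mu$ equals $\sum_i a_i$ exactly; the generalization to $\sum_i A_i \geq \mu$ is the step that fails.

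\textbf{What the paper actually does (and what this means).} The paper gives no proof of Theorem~\ref{thm:advlower}; it asserts in one line that the proof ``can be obtained by combining the proofs of Theorem~\ref{thm:adv} and Theorem~\ref{thm:lower}.'' Your proposal is a faithful attempt at exactly that combination, and it is to your credit that the attempt, if carried out carefully, would have exposed the sign issue at $\mu - A_1$. The lesson is that in the lower-tail adversarial setting, a larger realized $\sum_i A_i$ gives the process \emph{more} downward room, so the lower tail only increases; a mere lower bound $\sum_i A_i\geq\mu$ therefore cannot control the tail in terms of $\mu$. The induction in Lemma~\ref{lem:keyadv} closes precisely because in the upper-tail case $\sum_i A_i\leq\mu$ ensures $\mu - A_1\geq 0$ at every level, and that is the ingredient that is missing here.
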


\begin{cor}
  Suppose that Alice constructs a sequence of random variables $X_1, \ldots X_n$, with $X_i\in[0, c], c>0$, using the following iterative process. Once the outcomes of $X_1, \ldots, X_{i - 1}$ are determined, Alice then selects the probability distribution $\mathcal{D}_i$ from which $X_i$ will be drawn; $X_i$ is then drawn from distribution $\mathcal{D}_i$. Alice is an adaptive adversary in that she can adapt $\mathcal{D}_i$ to the outcomes of $X_1, \ldots, X_{i - 1}$. The only constraint on Alice is that $\sum_i \E[X_i \mid \mathcal{D}_i] \ge \mu$, that is, the sum of the means of the probability distributions $\mathcal{D}_1, \ldots, \mathcal{D}_n$ must be at least $\mu$.

If $X = \sum_i X_i$, then for any $\delta>0$,
\[\PP[X\leq(1-\delta)\mu]\leq\exp\left(-\frac{\delta^2\mu}{2c}\right).\]
\label{cor:advlower}
\end{cor}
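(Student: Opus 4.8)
The plan is to deduce Corollary~\ref{cor:advlower} from Theorem~\ref{thm:advlower} by mimicking, with all inequalities reversed, the derivation of Corollary~\ref{cor:adv} from Theorem~\ref{thm:adv}. First I would introduce the filtration $F_i = \sigma(X_1, \ldots, X_i, \mathcal{D}_1, \ldots, \mathcal{D}_{i+1})$, so that $F_{i-1}$ already determines the distribution $\mathcal{D}_i$ but not the draw $X_i$. Then I would set $A_i = \E[X_i \mid \mathcal{D}_i]$, $B_i = c - A_i$, and $Z_i = \sum_{j=1}^i (X_j - A_j)$, with $Z_0 = 0$.

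Next I would verify the hypotheses of Theorem~\ref{thm:advlower}. Since $A_i$ and $B_i$ are functions of $\mathcal{D}_i$ alone and $F_{i-1}$ reveals $\mathcal{D}_i$, the processes $\{A_i\}$ and $\{B_i\}$ are predictable. Since each $X_i$ is drawn from $\mathcal{D}_i$ after $F_{i-1}$ is revealed, $\E[X_i \mid F_{i-1}] = \E[X_i \mid \mathcal{D}_i] = A_i$, so $\E[Z_i \mid F_{i-1}] = Z_{i-1} + \E[X_i \mid F_{i-1}] - A_i = Z_{i-1}$; hence $\{Z_i\}$ is a martingale and, in particular, a submartingale. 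Because $X_i \in [0, c]$, we get $Z_i - Z_{i-1} = X_i - A_i \in [-A_i, c - A_i] = [-A_i, B_i]$, and $A_i + B_i = c$ by construction. Finally, the single constraint imposed on Alice is precisely $\sum_i A_i = \sum_i \E[X_i \mid \mathcal{D}_i] \ge \mu$ almost surely. Applying Theorem~\ref{thm:advlower} therefore gives $\PP[Z_n - Z_0 \le -\delta\mu] \le \exp(-\delta^2\mu/(2c))$.

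It remains to translate this back to $X = \sum_i X_i$. Expanding, $Z_n - Z_0 = \sum_i X_i - \sum_i A_i = X - \sum_i A_i$, so the bound reads $\PP[X \le \sum_i A_i - \delta\mu] \le \exp(-\delta^2\mu/(2c))$. The one point that needs a moment's care is that $\sum_i A_i$ is a random variable, not the constant $\mu$, so one cannot simply substitute $\mu$ for it; instead, since $\sum_i A_i \ge \mu$ holds almost surely, we have pointwise $(1-\delta)\mu = \mu - \delta\mu \le \sum_i A_i - \delta\mu$, and hence the event $\{X \le (1-\delta)\mu\}$ is contained in $\{X \le \sum_i A_i - \delta\mu\}$. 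Monotonicity of probability then yields $\PP[X \le (1-\delta)\mu] \le \exp(-\delta^2\mu/(2c))$, as claimed. This mirrors, in the opposite direction, the analogous final step in the proof of Corollary~\ref{cor:adv}.

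I do not expect a real obstacle within the corollary itself: modulo the event-inclusion observation above, it is exactly the bookkeeping of Corollary~\ref{cor:adv} with signs flipped. The genuine content of this appendix lies in Theorem~\ref{thm:advlower} (and its non-adversarial counterpart Theorem~\ref{thm:lower}), whose proofs should parallel Lemma~\ref{lem:keyadv} and Theorem~\ref{thm:main} but be driven by a sign-reversed variant of Lemma~\ref{lem:ineq} --- bounding the moment generating function $\E[e^{t(Z_i - Z_{i-1})}]$ for negative $t$ --- which is what makes the exponent come out as $\delta^2\mu/(2c)$ rather than $\delta^2\mu/((2+\delta)c)$.
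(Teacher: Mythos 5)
Your proof is correct and follows exactly the route the paper intends: the paper's own treatment of Corollary~\ref{cor:advlower} is literally the one sentence ``identical to the proof of Corollary~\ref{cor:adv},'' and your argument is a careful sign-reversed transcription of that proof, with the same filtration, the same choice of $A_i$, $B_i$, $Z_i$, and the same final appeal to Theorem~\ref{thm:advlower}. Your explicit remark about the event inclusion $\{X \le (1-\delta)\mu\} \subseteq \{X \le \sum_i A_i - \delta\mu\}$ (since $\sum_i A_i \ge \mu$ almost surely) is a nice touch: the paper leaves the analogous step in the proof of Corollary~\ref{cor:adv} implicit behind ``and thus we have,'' so spelling it out makes the argument cleaner without changing its substance.
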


\noindent We begin by proving Theorem \ref{thm:lower}. The proof is similar to the proof for the upper tail bound, with a different approximation used.

\begin{lemma}
For any $t<0$ and any random variable $X$ such that $\EE[X]\geq 0$ and $-a\leq X\leq b$,
\[\EE[e^{tX}]\leq\exp\left(\frac{a}{a+b}\left(e^{t(a+b)}-1\right)-ta\right).\]
\label{lem:ineq2}
\end{lemma}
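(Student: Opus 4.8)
The plan is to mirror the proof of Lemma \ref{lem:ineq} almost verbatim; the only genuine change is a sign check in the optimization over $\EE[X]$. First I would introduce the same chord: let $f$ be the affine function on $[-a,b]$ with $f(-a)=e^{-ta}$ and $f(b)=e^{tb}$. Since $x\mapsto e^{tx}$ is convex (its second derivative $t^2e^{tx}$ is positive regardless of the sign of $t$), it lies below its chord, so $e^{tX}\le f(X)$ pointwise on $[-a,b]$ and hence $\EE[e^{tX}]\le\EE[f(X)]$. Because $f$ is affine, $\EE[f(X)]$ depends on $X$ only through $\mu:=\EE[X]$, and writing it out gives
\[
\EE[f(X)]=\frac{b-\mu}{a+b}e^{-ta}+\frac{a+\mu}{a+b}e^{tb}.
\]

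The one place the argument departs from Lemma \ref{lem:ineq} is the monotonicity step: $\EE[f(X)]$ is affine in $\mu$ with slope $\frac{e^{tb}-e^{-ta}}{a+b}$, and for $t<0$ we have $tb\le 0\le -ta$, hence $e^{tb}\le e^{-ta}$ and the slope is nonpositive. Therefore, over the feasible range $\mu\ge 0$, the quantity $\EE[f(X)]$ is maximized at $\mu=0$. (In Lemma \ref{lem:ineq} the slope was nonnegative and the constraint was $\mu\le 0$, so the maximizer was again $\mu=0$; everything downstream is then identical in the two lemmas.) Plugging $\mu=0$ and simplifying exactly as before yields
\[
\EE[e^{tX}]\le\frac{b}{a+b}e^{-ta}+\frac{a}{a+b}e^{tb}=e^{-ta}\left(1+\frac{a}{a+b}\left(e^{t(a+b)}-1\right)\right)\le\exp\left(\frac{a}{a+b}\left(e^{t(a+b)}-1\right)-ta\right),
\]
where the last step is $1+y\le e^y$, which holds for every real $y$ and in particular for the (now possibly negative) value $y=\frac{a}{a+b}(e^{t(a+b)}-1)$.

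There is essentially no obstacle here: the statement is the exact mirror of Lemma \ref{lem:ineq}, and the only subtlety is tracking the direction of the inequalities $e^{tb}\le e^{-ta}$ and $e^{t(a+b)}-1\le 0$ when $t<0$, which flips the monotonicity in $\mu$ but leaves the extremal point $\mu=0$, and hence the final bound, unchanged. The genuinely different work for the lower tail—namely choosing $t=(\ln(1-\delta))/c<0$ in the martingale computation and invoking the estimate $\delta+(1-\delta)\ln(1-\delta)\ge\delta^2/2$ to obtain the exponent $-\delta^2\mu/(2c)$—happens in the subsequent proofs of Theorem \ref{thm:lower} and Theorem \ref{thm:advlower}, not in this lemma.
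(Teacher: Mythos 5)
Your proposal is correct and matches the paper's intent exactly: the paper's proof of Lemma~\ref{lem:ineq2} is literally the single line ``Same as Lemma~\ref{lem:ineq},'' and your argument is the Lemma~\ref{lem:ineq} chord argument with the one sign-sensitive step spelled out. You have correctly identified the only detail the paper leaves implicit --- that for $t<0$ the slope $\frac{e^{tb}-e^{-ta}}{a+b}$ of $\EE[f(X)]$ in $\mu=\EE[X]$ is nonpositive while the constraint flips to $\mu\ge 0$, so the extremizer is still $\mu=0$ --- and you correctly note that $1+y\le e^y$ needs no sign hypothesis on $y$.
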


\begin{proof}
Same as Lemma \ref{lem:ineq}.
\end{proof}

\begin{proof}[Proof of Theorem \ref{thm:lower}]
By Markov's inequality, for any $t<0$ and $v$,
\begin{flalign*}
\PP[Z_n-Z_0\leq v]&=\PP[t(Z_n-Z_0)\geq tv]\\
&=\PP\left[e^{t(Z_n-Z_0)}\geq e^{tv}\right]\\
&\leq\frac{\EE[e^{t(Z_n-Z_0)}]}{e^{tv}}.
\end{flalign*}
Let $X_i=Z_i-Z_{i-1}$. Since $Z_i$ is a submartingale, for any $i$, $\EE[X_i\mid Z_0, \ldots, Z_{i-1}]\geq 0$. Moreover, from the assumptions in the problem, $-a_i\leq X_i\leq b_i$. Therefore, Lemma \ref{lem:ineq2} applies to $X=(X_i\mid Z_0, \ldots,  Z_{i-1})$, and we have
\[\EE[e^{tX_i}\mid Z_0, \ldots, Z_{i-1}]\leq\exp\left(\frac{a_i}{c}\left(e^{tc}-1\right)-ta_i\right),\]
for any $t < 0$.
Using the same derivation as in the proof for Theorem \ref{thm:main}, we have
\[\PP[Z_n-Z_0\leq v]\leq\exp\left(\frac{\mu}{c}\left(e^{tc}-1\right)-t\mu-tv\right).\]
Plugging in $t={\ln(1-\delta)}/{c}$ and $v=-\delta\mu$ for $\delta>0$ yields
\[\PP[Z_n-Z_0\leq-\delta\mu]\leq\exp\left(\frac{\mu}{c}(-\delta-(1-\delta)\ln(1-\delta))\right).\]
For any $0\leq\delta<1$,
\[-\delta-(1-\delta)\ln(1-\delta)\leq-\frac{\delta^2}{2},\]
which can be seen by inspecting the derivative of both sides.\footnote{Consider $f(x)=-x/(1-x)-\ln(1-x)+x^2/(2(1-x))$. Then $f(0)=0$, and $f'(x)=-x^2/(2(1-x)^2)\leq0$ for $0\leq x<1$. Therefore, $f(x)\leq 0$ for $0\leq x<1$, and the inequality holds for $0\leq\delta<1$.}
As a result,
\[\PP[Z_n-Z_0\leq-\delta\mu]\leq\exp\left(-\frac{\delta^2\mu}{2c}\right).\]

\end{proof}

\begin{remark}
As with the upper tail bound, we may derive a stronger but more unwieldy bound of
\[\PP[Z_n-Z_0\leq-\delta\mu]\leq\left(\frac{e^{-\delta}}{(1-\delta)^{(1-\delta)}}\right)^{{\mu}/{c}}.\]
\end{remark}

The proof of Corollary \ref{cor:lower} is identical to the proof of Corollary \ref{cor:main}.

The proof of Theorem \ref{thm:advlower} can be obtained by combining the proofs of Theorem \ref{thm:adv} and Theorem \ref{thm:lower}.

The proof of Corollary \ref{cor:advlower} is identical to the proof of Corollary \ref{cor:adv}.

\end{document}